\newcommand{\R}{\mathbb{R}}
\newcommand{\E}{\mathbf{E}}
\newcommand{\F}{\mathcal{F}}
\newtheorem{theorem}{THEOREM}
\newtheorem{lemma}{LEMMA}
\newtheorem{definition}{DEFINITION}
\newtheorem{assumption}{ASSUMPTION}
\newcounter{remark}
\def\bi{\begin{itemize}}
\def\ei{\end{itemize}}
\def\i{\item}
\begin{document}

\title{No-arbitrage pricing under cross-ownership}
\author{Tom Fischer\thanks{Institute of Mathematics, University of Wuerzburg, Am Hubland, 97074 Wuerzburg, Germany.
Tel: +49 931 31 88911.
E-mail: {\tt tom.fischer@uni-wuerzburg.de}.
}\\
{University of Wuerzburg}} 
\date{First version: January 9, 2010\\
This version: \today}
\maketitle

\begin{abstract}
We generalize Merton's asset valuation approach to systems of multiple financial firms 
where cross-ownership of equities and liabilities is present.
The liabilities, which may include debts and derivatives, can be of differing seniority. 
We derive equations for the prices of equities and recovery claims under no-arbitrage.
An existence result and a uniqueness result are proven.
Examples and an algorithm for the simultaneous calculation of all no-arbitrage prices are provided. 
A result on capital structure irrelevance for groups of firms regarding externally held claims 
is discussed, as well as financial leverage and systemic risk caused by cross-ownership. 

\end{abstract}

\noindent{\bf Key words:} 
Absolute priority rule, capital structure irrelevance, 
contingent claims analysis, counterparty risk, credit risk, cross-ownership,
derivatives pricing, financial contagion, leverage, Merton model, 
multi-asset valuation, no-arbitrage pricing, ownership structure, priority of claims,
reciprocal ownership, seniority of debt, structural models, systemic risk.\\

\noindent{\bf JEL Classification:} G01, G12, G13, G32, G33\\

\noindent{\bf MSC2010:} 91B24, 91B25, 91B52, 91G20, 91G40, 91G50

\tableofcontents



\section{Introduction}

\subsection{Preliminaries}
\label{Preliminaries}

The main purpose of this paper is to investigate and overcome the intricacies of 
determining no-arbitrage prices for the equities and
the liabilities of a group of firms where the financial fates of these 
firms are intertwined through the cross-ownership of financial assets issued or guaranteed by the
same firms. In this theory, the liabilities can include debt and derivatives,
and the claims belonging to these liabilities are allowed to
have differing priorities in a potential liquidation.
 
On the one hand, the problem of no-arbitrage pricing under cross-ownership
is real, and very important to consider, since cross-ownership of 
financial assets is present in the world's financial markets. For instance, McDonald (1989) writes
that in 1987 double counting from {\em cross-ownership of equity} (also called ``reciprocal ownership", 
``corporate cross-holding", or ``intercorporate shareholding")
accounted for ``at least 24\% of Japan's reported market capitalization." 
An article by B{\o}hren and Michalsen (1994) shows at the example of the Oslo stock exchange how 
equity cross-ownership can lead to double counting of assets and overstated market equity. 
Ritzberger and Shorish (2002) state 
``that under cross-ownership the book value of a firm will tend to be overestimated with respect to the
underlying cash flows." B{\o}hren and Michalsen (1994) also show that 
financial leverage can be understated due to cross-ownership.
Furthermore, given the evidence from more recent events in global financial markets,
not only the danger of overstated market aggregates, but also the very real risk of 
financial contagion seems to at least partially stem from cross-ownership. 
However, especially in this context of systemic risk, cross-ownership should not only be 
considered for equities, but also for liabilities like bonds and derivatives issued by the
considered firms.

On the other hand, the topic of financial cross-ownership is essentially
non-existent in the literature of financial mathematics as far as asset valuation models, 
credit risk models, and derivatives pricing are concerned. In financial economics, or finance in general,
cross-ownership {\em is} considered, but mostly in conjunction with the separation of ownership from
the control of firms (e.g.~in  Bebchuk, Kraakman and Triantis (2000), Ritzberger and Shorish (2002), 
Dorofeenko et al.~(2008), and references therein), or with respect to the distortion of market aggregates 
(e.g.~McDonald (1989), B{\o}hren and Michalsen (1994), and references therein). 
Because of this, the focus of cross-ownership considerations in financial economics primarily 
lies on equity. This, however, is insufficient from a mathematical finance point of view, because
it neglects aspects of liabilities like debt or derivatives, where cross-ownership
can play a substantial role in the context of asset valuation under counterparty risk.

It is indeed somewhat surprising that mathematical finance has neglected the area of cross-ownership,
since, for instance when trying to assess the credit-worthiness of a financial firm, one 
of the natural questions to ask would be: what does it mean for the balance sheet
of company A if company B defaults on its debt, given that A owns parts of B's debt and maybe 
even some of its equity? Also, how severely are the financial promises of A (debt or derivatives
issued by A) affected when B defaults? For a larger number of firms that
are intertwined by cross-ownership, these kind of questions can quickly become very complicated.  
As Ritzberger and Shorish (2002) write in the context of separation of ownership from
the control of firms:
\begin{quote}
[...] The precise quantitative effect of cross-ownership between firms is, however, difficult
to capture, both at the theoretical and the empirical level. [...]

[...](``A owns part of B, B owns part of A, so A owns part of B's ownership of A, which is also 
part of a part of A's ownership of B, which is..."). This recursion must be addressed, [...]
\end{quote}
In our context of asset valuation, the ``recursion" Ritzberger and Shorish write about could 
be thought of as a self-feeding financial feedback loop -- potentially a financial vicious circle. 
For instance, in the earlier example, the deterioration of A, that was originally caused 
by B, could damage the financial status of B even further. It is clear that these kind of
scenarios have to be considered when it comes to pricing financial assets. The questions 
asked above go straight to the core of a very important and very timely problem: 
{\em financial contagion} and, indeed, {\em systemic risk}.

\subsection{Asset valuation models}

To understand why cross-ownership has virtually never been considered in mathematical finance,
one needs to have a closer look at modern credit risk models. The reason for this is that 
Merton's asset valuation model, also called the {\em Merton model} 
(Black and Scholes (1973), Merton (1973, 1974)) and its offspring today are mostly 
used in the context of credit risk management. 

Merton's model, where equity and debt are considered as 
derivatives of an underlying value process of the firm's assets, 
inspired many extensions and refinements.
These models are usually summarily called {\em structural} models, also ``firm-value models"
or ``threshold models", since they attempt to value assets by modelling the financial structure of the
considered companies. In these credit risk models, a credit event is usually triggered when the assets
(the ``firm value") of a company fall below a certain threshold -- in Merton's case the nominal amount of
outstanding debt. This fundamental idea is the basis of a plethora of models that aim to improve the 
original Black-Merton-Scholes approach. Already Merton (1974) includes coupon paying bonds and explains
how to incorporate stochastic interest rates. Further models that extend the original idea to 
stochastic interest rates,   
differing liabilities, 
differing maturities,   
counterparty risk (``vulnerable options"),  
and jump diffusion   
exist - including multi-firm models.   
It is far beyond the scope of this paper to give a comprehensive overview of these models.
For a summary of the literature we refer to standard textbooks like Bingham and Kiesel (2004),
as well as the references therein.

In contrast to the structural approaches, so-called {\em reduced-form} models
attempt to model credit risk and especially default rates (therefore also
the name ``default rate models") in a  more statistical way, usually by assuming independence
of credit events conditional on certain underlying stochastic factors. 
For an overview over credit risk models and over structural versus reduced-form models see for
instance Crouhy, Galai and Mark (2000) and Arora, Bohn and Zhu (2005), or textbooks like
Bingham and Kiesel (2004). In general, it is clear that 
default rate models are less useful if asset valuation (rather than credit risk
management) for a broad spectrum of assets and liabilities is the main goal. 

While the Black-Merton-Scholes approach is the basis of modern
structural credit risk models (the model of Moody's KMV 
(cf.~Arora, Bohn and Zhu (2005) and Kealhofer and Bohn (2001)) 
possibly being the commercially most successful one), 
a rupture in the rationale of structural approaches seems to appear as soon as multiple 
firms are considered. While at the level of the individual firm the blueprint of the Merton model
is usually clearly visible, at the multi-firm or inter-firm level, structural appproaches 
seem to not get any consideration at all. For instance, models like the one of Moody's KMV seem
to not go far beyond modelling correlations (cf.~Kealhofer and Bohn (2001)) or, more generally, 
multivariate distributions for the firms' underlying assets.
Such dependencies between assets certainly lead to 
dependencies between the firms' default events when the original Black-Merton-Scholes idea is then 
applied to the individual firm -- which is that default is triggered when the assets of the company
fall below a certain level. However, it is obvious that such
approaches do not attempt to model the actual ownership or cross-ownership structures that possibly 
contribute to the observed credit event correlations. In this sense, it is possibly fair 
to say that structural models turn reduced-form at the multi-firm or inter-firm level.

Before moving on to the next section, in the context of contagion at the two-firm level, 
i.e.~regarding counterparty risk or ``vulnerable options", 
there are certainly many papers considering this problem. For instance, see Jarrow and Turnbull (1995)
and the references therein. Regarding financial contagion at a larger scale, it should be mentioned
that there are articles on credit risk that focus especially on modelling credit contagion 
(e.g.~Horst (2007)). However, the question of (and the role of) multi-firm, multi-liability, and
multi-priority cross-ownership is generally not addressed in these papers.

\subsection{General liabilities under cross-ownership}
\label{General liabilities under cross-ownership}

While the focus of the Merton model and its (structural) sucessors lies on credit risk, this is not 
the main focus of this paper. Let us assume that we had a `correct' model of all considered firms' 
financial structures which was also correctly describing the inter-firm structures in the form of 
cross-ownership -- cross-ownership not only of equity, but also of general liabilities that besides
debt could be in the form of derivatives that had been issued by or that were guaranteed by these firms.
One would expect that such a `correct' model should not only make it possible to model credit events 
more precisely. Additionally, better pricing of equities and liabilities in general should be possible.
Similarly, it should be possible to properly value an issued derivative {\em including}
any counterparty risk present. 

In reality, and in contrast to most existing models, liabilities can be of differing seniority. 
The {\em seniority} of a liability, or the {\em priority} of the corresponding
financial claim, defines the order of repayment in a potential liquidation event.
For instance, senior debt must be paid before subordinate debt, where subordinate debt itself
can again be ranked by so-called tiers. In general, equity in the form of common stock is subordinate
to other liabilities, i.e.~equity is the ``residual claim". It is therefore clear that the 
proper incorporation of the seniority of liabilities is important when it comes to valuation. 

The general liabilities mentioned above should not only include debts or simple financial 
derivatives. In general, a liability could be any properly defined financial 
commitment. For example, it could be the (naked) short-sale of another asset or financial claim, 
a liability or a guarantee related to a mortgage-backed security, as well as an insurance liability. 
For instance, Walsh (2009) writes in The New York Times:
\begin{quote}
[...] A.I.G.'s individual insurance companies have been doing an unusual volume of business with 
each other for many years -- investing in each other's stocks; borrowing from each other's
investment portfolios; and guaranteeing each other's insurance policies, even when they have
lacked the means to make good. [...]
\end{quote}
Here would be a real life situation where cross-ownership was (or is) possibly not only present in equities, 
but also in insurance liabilities or derivatives thereof -- additionally to any other financial
liabilities present, e.g.~in the form of derivatives contracts with firms outside A.I.G.
As far as not obvious anyway, this case clearly shows that cross-ownership is not only an issue
in national Asian or European markets (cf.~McDonald (1989) for Japan, B{\o}hren and Michalsen (1994) for 
Norway; also compare the concept of the {\em Deutschland AG}, a historic network of cross-ownership 
or ``capital entanglement" among Germany's blue chip companies (cf.~H\"opner and Krempel (2003))), 
but it can very much also be
an issue within a conglomerate of firms or within a holding company.
As such, cross-ownership issues should have a very high priority for any regulator or regulatory
framework like Basel II or Solvency II.

The importance of cross-ownership goes beyond the examples mentioned so far.
Cross-ownership is present in more markets and at more levels than first meets the eye.
For instance, consider a houseowner who serves a mortgage on her house. We can consider the houseowner
as a financial firm, where the house is possibly the largest part of the assets, and the mortgage 
possibly is the main liability. 
Like a firm, the houseowner has a limited liability (at least in the U.S.) since she could
declare personal bankruptcy or default on her mortgage should her total assets become worth less than
her total liabilities. 
Assume now that the mortgage holder also owns a portfolio of stocks, for instance through her
pension fund. Furthermore assume that the mortgage was arranged by a major investment bank, was 
sliced and packed into a mortgage-backed security which was sold by the investment bank, possibly together
with a guarantee against default. It is now not at all beyond imagination that the pension fund
of our mortgage holder not only owns shares in the mentioned investment bank, but maybe it also
holds mortgage-backed securities that might include our homeowner's mortgage, plus guarantee. 
On a nationwide or
even global scale, it is therefore entirely imaginable that homeowners who owe a mortgage might in
turn indirectly own parts of their own mortgage and possibly parts of other people's mortgages,
including the corresponding guarantees. 
They also might own shares in the involved investment banks that possibly still hold parts of
these mortgages too.
The implications are clear: a downturn in the housing market (for whatever reason) might not
only affect the homeowner's equity directly through the house ownership, but it might also affect her
by owning the investment banks or mortgage-backed securities involved. This would affect the
value of the mortgage, which in turn influences the value of the investment bank 
(if it holds any) and the value of any
mortgage-backed securities and guarantees that are involved, etc. There is a clear risk 
of contagion or even a systemic risk present in such a situation. A realistic model should try
to measure these risks and take them into account when pricing equities and liabilities.

The examples above should be sufficient to understand why modelling
cross-ownership relationships and properly taking them into account when pricing equities 
and liabilities should be a major priority of mathematical finance. 
A generalized multi-firm structural model that 
incorporates cross-ownership structures should have a profound impact not only on the theory, but
also on the practice of asset valuation. It is clear that it can be difficult to
obtain cross-ownership information and that optimally one would like to have perfect information
(lack of information, especially of price information about the underlying assets, is a 
general problem of all structural models).
It is also clear that in many cases influences other than cross-ownership might play large
roles in (or even dominate) the pricing procedure as well. 
However, some observers, like regulators, might have more information than ordinary market
participants, and it is clear that any properly used cross-ownership information at all can
improve a model by making it more realistic than one where such information is ignored. 
In that sense, any cross-ownership information is valuable.

\subsection{Valued added}
\label{Valued added}

The model of this paper is an extension of the Merton model in the sense that it has only one
maturity date and in the sense that prices at maturity are determined assuming
no-arbitrage, which includes the assumption of no-arbitrage in a potential liquidation event.
However, the main focus of the paper is 
not the calculation of no-arbitrage prices prior to maturity, but the determination
of no-arbitrage prices {\em at} maturity, which is trivial in the Merton model, but not in 
the cross-ownership case.
It will become clear that the calculation of no-arbitrage prices before maturity is straightforward by
risk-neutral pricing (cf.~Sec.~\ref{Equilibria for non-zero maturities}) once no-arbitrage prices 
at maturity for any given scenario of the underlying exogenous assets have been determined. 
The ways in which our model extends the theory of asset valuation in Merton's model are the following: 
\begin{enumerate} 
\item It models the cross-ownership of assets and liabilities for groups of firms.
\item Multiple different classes of liabilities, e.g.~debt and derivatives, are allowed.
\item Multiple different priorities of claims in a possible liquidation are allowed.
\item It directly incorporates counterparty risk at the multi-firm level into derivatives pricing.
\item Underlying exogenous assets can be stochastically dependent, a feature other existing 
multi-firm models have.
\item A principle of capital structure irrelevance for a system of firms in a (no-arbitrage) price 
equilibrium is derived (value of externally held claims = value of exogenous assets).
\item Situations where valuation is impossible (no no-arbitrage price equilibrium/multiple no-arbitrage 
price equilibria) can exist.
\i It should be possible to extend any model based on Merton's original idea 
(for instance the Moody's KMV model) to incorporate 
cross-ownership along the lines of the model of this paper.
\end{enumerate} 
The main challenge in this paper is to formulate
and solve the (in Merton's single firm model: trivial) no-arbitrage equations 
(cf.~Sec.~\ref{The equilibrium equations}) that properly account for cross-ownership in terms of
the balance sheet items of all firms at maturity, under any given possible economic scenario 
of the underlying exogenously priced assets. Why this is trivial in Merton's model
but much less so in the extended model will become clear in the simple example of Section
\ref{Cross-ownership: a two-firm example}. A further important insight will be that
in the case of a unique no-arbitrage price equilibrium, similarly to the Merton model, 
{\em all equities and liabilities are direct derivatives of the underlying exogenously priced assets}. 
This makes no-arbitrage pricing at times before maturity possible by means of risk-neutral valuation
techniques. 

The paper is structured as follows.
After the motivating example of Section \ref{Cross-ownership: a two-firm example} and the
introduction of some notation in Section \ref{Notation and mathematical preliminaries},
Section \ref{Equilibrium for liabilities with multiple priorities} presents and interprets the main
results. These results are essentially contained in Theorem \ref{master-theo}, which is on the
existence and uniqueness of no-arbitrage price equilibria under the cross-ownership of equities 
and general liabilities
of differing seniority. Section \ref{Equilibrium for liabilities with multiple priorities} also
derives results for the accounting equations of the considered system of firms. Measures of the degree
of cross-ownership with regards to financial leverage and financial contagion are defined and discussed.
After the proof of Theorem \ref{master-theo} in Section \ref{Proof and algorithm}, we consider
several applications and examples in Section \ref{Examples}. A conclusion and a technical appendix 
follow.


\section{Cross-ownership: a two-firm example}
\label{Cross-ownership: a two-firm example}

To illustrate and motivate the main results of this paper, we will have a look at an example
with only two companies. Assume that company $i$ ($i=1,2$) has outstanding nominal debt of $b_i \geq 0$.
There are no coupon or interest payments, and both loans have to be paid back at the same future point
in time (maturity). 
Company $i$ holds assets of market value $a_i\geq 0$ at maturity. We assume that these assets are exogenously 
priced in the sense that the capital structure of the companies $i=1,2$ has no
influence on the value of these assets. 
In the original Merton model, the equity of company $i$ at maturity, $s_i$, 
is now given by
\begin{equation}
\label{s_i}
s_i = (a_i - b_i)^+ = \max\{0, a_i - b_i\} .
\end{equation}
The reason for this valuation formula is that, under the bankruptcy laws of many economies,
equity in firm $i$ essentially is a European call option on the assets of $i$, $a_i$, with a strike price
of $b_i$. Expressed in layman's terms, the owners of $i$ get what is left of the
company's assets after the debt $b_i$ has been paid. 
However, because of limited liability, the owners will never encounter negative equity, 
i.e.~the owners will never have to make up for losses of the creditors of firm $i$ by paying
them out of their own pockets after their (the shareholders') stake in the equity of $i$ has become
worthless. We can therefore say that $s_i$ is the {\em liquidation value}
(in a perfectly liquid market with no frictions like transaction costs or taxes)
of the equity (``common stock") of company $i$ at maturity. The {\em no-arbitrage price} 
of company $i$ {\em at maturity} should therefore 
be identical to $s_i$ as otherwise arbitrage opportunities would exist.
The creditors who gave company $i$ the loan with principal $b_i$ at outset recover at maturity
the amount
\begin{equation}
\label{r_i}
r_i = \min\{b_i, a_i\} .
\end{equation}
From the creditors' point of view, $r_i$ is {\em recoverable part of the claim} $b_i$ 
(also called {\em recovery claim}).
From company $i$'s point of view, $r_i$ is the {\em payable part of the liability} $b_i$.
The reason for \eqref{r_i} is that the creditors can not collect more than the market value of the 
company's assets, $a_i$, even if $a_i$ is less than the outstanding principal $b_i$. Therefore, $r_i$ 
can be interpreted as the liquidation value of the nominal debt $b_i$ at maturity, and hence it 
must be identical to the price of this debt at maturity under the assumption of no arbitrage.
Actual market prices and no-arbitrage prices might converge in real markets (even in
the absence of a realistic chance of immediate liquidation) for the reason of so-called capital-structure
arbitrage where hedge funds ``attempt to use these models to buy the underpriced part of a firm's capital
structure, be it debt or equity, and sell the overpriced part'' (Robert C.~Merton in Mitchell (2004)).

As an example, if $b_i$ was 100, and at maturity $a_i$ was 150, then equity would be $s_i = 50$ and the
loan recovery $r_i = 100$, i.e.~the full loan would be recovered. In contrary, was $b_i=100$ as before, 
but at maturity $a_i$ was 50, then equity would be $s_i = 0$ since the owners of the firm would not 
exercise their call option on the de-facto negative equity. The loan recovery would 
therefore only be $r_i = a_i = 50$. 

In the classical Merton model, the (by $a_i$ and $b_i$) uniquely determined no-arbitrage prices at maturity,
\eqref{s_i} and \eqref{r_i}, are used to obtain no-arbitrage prices for stocks and bonds at times 
{\em before} maturity by modelling the price process of the assets ($a_i$ at maturity) before maturity by a 
geometric Brownian motion. The Black-Scholes model (Black and Scholes (1973), Merton (1973)) then
provides a direct solution for the no-arbitrage prices of equities and debts (Merton (1974)). 
The {\em crucial non-stochastic ingredients} of this pricing approach are the no-arbitrage prices at
maturity given by \eqref{s_i} and \eqref{r_i} which show that equity and debt are mere derivatives of the
underlying exogenous assets.

Expressed at the level of our example with two firms, the question this paper will investigate is: what
happens to the no-arbitrage prices of stocks and bonds at maturity, $s_i$ and $r_i$, if company
$i$ ($i=1,2$) is allowed to own stock or bonds of company $j$ ($j=2,1$)? In other words, can (and if so,
how) the Merton model be generalized to the case where cross-ownership is allowed?

This is a very important question for two reasons. First and foremost, as pointed out before,
cross-ownership is present in real markets, therefore it has to be modelled appropriately. 
At the time of writing this paper, we are not aware of any extension
or generalization of Merton's corporate debt model in this direction. 
Second, if under cross-ownership there were unique no-arbitrage prices at maturity as
in the case of the classical Merton model without cross-ownership, then, because of the existence
of an implicit function that mapped exogenous asset prices on endogenous assets' no-arbitrage prices,
the calculation of no-arbitrage
prices at times other than maturity would be straightforward from the completeness of the original
Black-Scholes approach used in Merton (1974). The reason for this is that as in
the case of the classical Merton model, the no-arbitrage prices of equities and debts at maturity would
simply be derivatives of the underlying exogenous assets and they could therefore be priced by the 
risk-neutral pricing approach using an equivalent martingale measure. For this reason, the present paper
will focus on the investigation of {\em no-arbitrage prices at maturity}, since at least from a theoretical
perspective, no-arbitrage prices before maturity are straightforward to obtain if unique price equilibria
at maturity exist and if a complete model for the price processes of the exogenous assets is chosen.

For an illustration of the changed situation under cross-ownership again consider the above two-firm
example ($b_i=100$ for $i=1,2$), but with the following changes to its set-up: assume that company $i$ ($i=1,2$)
owns 50\% of the equity of company $j$ ($j=2,1$). We assume here that ownership in any of the two
companies is an homogeneous asset class where all owners have the same rights in proportion.
Assume now that there were no-arbitrage prices for equities and debts, $s_i$ and $r_i$ ($i=1,2$). 
As before in the case with no cross-ownership, we define no-arbitrage prices as no-arbitrage prices
under liquidation. 
If this is the case, equation \eqref{s_i} turns into ($i=1,2$; $j=2,1$)
\begin{equation}
\label{s_i_c}
s_i =  (a_i + 0.5s_j - b_i)^+ ,
\end{equation}
since the total assets of company $i$ ($i=1,2$) are given by the exogenous assets $a_i$ plus the share in the
equity of company $j$ ($j=2,1$), $0.5s_j$. The recoverable debt at maturity, $r_i$, is
\begin{equation}
\label{r_i_c}
r_i = \min\{b_i, a_i + 0.5s_j\} .
\end{equation}
The fundamental difference to the set-up with no cross-ownership in equations \eqref{s_i} and \eqref{r_i}
is that \eqref{s_i_c} is not a direct expression of $s_i$ in terms of the exogenous assets $a_i$ and the debt
level $b_i$. Instead, \eqref{s_i_c} is a {\em system} of two equations ($i=1,2$; $j=2,1$) for the two
unkowns $s_1$ and $s_2$.
Having determined (if possible) $s_1$ and $s_2$, the recoveries $r_i$ follow immediately from \eqref{r_i_c}.
The fundamental question in this example is therefore: given $a_i\geq 0$ and $b_i\geq 0$ ($i=1,2$), has
system \eqref{s_i_c}
a solution $(s^*_1,s^*_2)$, and if so, is this solution unique? If both was the case
and if $b_i$ ($i=1,2$) was assumed to be fixed, then 
\begin{equation}
(s^*_1,s^*_2) = (s^*_1(a_1,a_2),s^*_2(a_1,a_2))
\end{equation}
would be an implicit function (in finance terms: a derivative) of the $a_i$ ($i=1,2$). 
If this function was (Lebesgue-)measurable, then one would know how to price the equities and loans 
at times before maturity using risk-neutral valuation for a given complete model of the exogenous assets.

Similarly to the above case, we could consider a set-up with no equity cross-ownership, but cross-ownership 
of debt instead: assume that company $i$ ($i=1,2$) owns 50\% of the debt of company $j$ ($j=2,1$). 
Again, assume that ownership of the debt of any of the two companies is an homogeneous asset class where all 
owners have the same rights in proportion. The new system for no-arbitrage equity prices would therefore be
($i=1,2$; $j=2,1$)
\begin{equation}
\label{s_i_cb}
s_i =  (a_i + 0.5r_j - b_i)^+ ,
\end{equation}
and the recovery of debt at maturity would be ($i=1,2$; $j=2,1$)
\begin{equation}
\label{r_i_cb}
r_i = \min\{b_i, (a_i + 0.5r_j)^+\}.
\end{equation}
The $(\cdot)^+$ in the last expression is there since debt recovery can not be negative --
similar to the equity owned by the shareholder. 
While this time the equity system \eqref{s_i_cb} is straightforward, the important question is
whether the loan recovery system \eqref{r_i_cb} has a unique solution $(r^*_1,r^*_2)$.

The answer to the questions asked in both of the above examples is {\em yes}: no-arbitrage prices
exist and they are unique. This is a direct consequence of the main result of this paper
in Section \ref{Existence and uniqueness results}.
Existence and uniqueness of no-arbitrage prices also apply in much more general set-ups than above, for
instance, we can allow cross-ownership of equities {\em and} loans (and even derivatives that have been
issued on these) at any percentage level at the same time and
for any number of companies with differing debt levels. This is a major generalization of Merton's original 
model and should have direct implications for many existing credit and derivatives pricing models in theory and 
practice that use the Merton model as a basis.

To give a simple numeric example, assume for each of the three set-ups above (no cross-ownership, 50\% stock
cross-ownership, 50\% bond cross-ownership) that the companies $1$ and $2$ have exactly the same debt
levels, $b_1=b_2$, and that they own exactly the same type of exogenous assets, i.e.~$a_1 = a_2$. 
The no-arbitrage prices for their equities and debts will therefore be identical.
Table \ref{tab:1} contains in each row for each of the three different set-ups a value (scenario) for $a_1=a_2$
that implies the same no-arbitrage prices under all three set-ups.
\begin{table}[htb]
\caption{Example for $b_1=b_2=100$}
\label{tab:1}       
\vspace{2mm}
\begin{center}
\begin{tabular}{|c||c||c||c|c|}
\hline
no XOS & 50\% stock XOS  & 50\% bond XOS & \multicolumn{2}{c|}{no-arbitrage prices}\\
$a_1=a_2$ & $a_1=a_2$ & $a_1=a_2$ & $s_1=s_2$ &  $r_1=r_2$ \\
\hline
150 & 125 & 100 & 50 & 100 \\
100 & 100 &  50 &  0 & 100 \\
 50 &  50 &  25 &  0 &  50 \\
\hline
\end{tabular}
\end{center}
\end{table}
While large-scale numerical examples and more detailed examples lie outside the scope of this paper,
a few remarks regarding Table \ref{tab:1} seem apppropriate. First, the second row clearly marks the bankruptcy 
level for all three set-ups. Second, in both cases, 50\% stock cross-ownership and 50\% bond cross-ownership,
starting with the first row, a change of -75 (or for both firms a total of -150) in the exogenous assets, down
to the level in row three, is enough to wipe out all equity and 50\% of the bond value. This is a loss of value
of 100 per firm, or 200 for both firms together. In the case of no cross-ownership, a change of -100 (or for 
both firms a total of -200) is needed for the same result. So, as should be entirely expected, starting
at the same level of equity and debt, the leverage caused by the cross-ownership structure in
the market clearly creates a higher risk regarding moves in the prices of the exogenous assets. This is 
a feature classical corporate debt models or credit risk models can not properly reflect since they are not
attempting to model cross-ownership structures that are present. Mere modelling of correlations or other
dependency structures between exogenously priced assets (which is possible within the model of this paper
as well) is not sufficient to reflect the actual interdependencies that stem from the cross-ownership of assets.


\section{Notation and mathematical preliminaries}
\label{Notation and mathematical preliminaries}

We write $\mathbf{M} > \mathbf{N}$ if for two matrices 
$\mathbf{M}, \mathbf{N} \in \R^{n\times n}$ with $\mathbf{M} = (M_{ij})_{i,j=1,\ldots,n},
\mathbf{N} = (N_{ij})_{i,j=1,\ldots,n}$ we have $M_{ij}\geq N_{ij}$
for $i,j\in\{1,\ldots,n\}$, and $M_{ij}> N_{ij}$ for at least one pair $(i,j)$.
A matrix $\mathbf{M}\in \R^{n\times n}$ is called {\em left substochastic} if $\mathbf{M}$ is
non-negative, i.e.~$M_{ij}\geq 0$ for $i,j\in\{1,\ldots,n\}$, and for any $j$ one has
$\sum_{i=1}^{n} M_{ij} \leq 1$, 
i.e.~the sums of columns are less than or equal to 1. We will call a column $j$ of $\mathbf{M}$
{\em strictly substochastic} if $\sum_{i=1}^{n} M_{ij} < 1$. We will call $\mathbf{M}$
{\em strictly left substochastic} if all columns of $\mathbf{M}$ are strictly substochastic.
A {\em right substochastic} matrix is a transposed left substochastic matrix.
A left substochastic matrix can be interpreted as an ownership matrix (see also
Ritzberger and Shorish (2002)). 
The meaning and usage of an ownership matrix will be explained in Section 
\ref{General liabilities with multiple priorities}
We will distinguish between column vectors $\mathbf{a} = (a_1, \ldots, a_n)^t \in \R^{n\times 1}$
and row vectors $\mathbf{a}^t = (a_1, \ldots, a_n)\in \R^{1\times n}$. However, for column vectors 
we will often conveniently write $\mathbf{a}\in\R^n$ and $\mathbf{a} = (a_1, \ldots, a_n)$. 
From the context, no confusion should arise.
The meaning of $\mathbf{a} > \mathbf{b}$
and $\mathbf{a}^t > \mathbf{b}^t$ for $\mathbf{a}, \mathbf{b}\in \R^n$ 
is analogue to the conventions for matrices. We use the symbols
$\mathbf{0} = (0,\ldots,0)$ (where $\mathbf{0}$ might also be used for a zero
matrix), $\mathbf{1} = (1,\ldots,1)$, and $\mathbf{I}$ for the identity matrix.
In all cases where these symbols are used, the dimension should be clear from the context.
The following operations will apply element-wise to matrices and vectors:
the positive part, $(\cdot)^+$; the negative part, $(\cdot)^-$; the maximum, $\max\{\cdot,\cdot\}$;
the minimum, $\min\{\cdot,\cdot\}$. 
We will make use of the $\ell^1$-norm on $\R^n$, for $\mathbf{x}\in\R^n$ given by
\begin{equation}
||\mathbf{x}||_1 = \sum_{i=1}^{n} |x_i| .
\end{equation}
For the $\ell^1$-norm of any strictly left substochastic matrix $\mathbf{M}$ we have ($\mathbf{x}\in\R^n$)
\begin{equation}
\label{matrix-norm}
||\mathbf{M}||_1 \; = \; \max_{||\mathbf{x}||_1=1}  ||\mathbf{M}\cdot\mathbf{x}||_1 \; = \;
\max_{j} \sum_{i=1}^{n} M_{ij} \; < \; 1 ,
\end{equation}
that means $||\mathbf{M}||_1$ is the maximum of the column sums. As expected, it follows that
$||\mathbf{M}\cdot\mathbf{x}||_1 \leq ||\mathbf{M}||_1 \cdot ||\mathbf{x}||_1$, since
\begin{align}
\label{l_1}
||\mathbf{M}\cdot\mathbf{x}||_1 & \; = \; \sum_{i=1}^{n} \left|\sum_{j=1}^{n} M_{ij}x_j \right| 
\; \leq \; \sum_{j=1}^{n}|x_j|\cdot\sum_{i=1}^{n} M_{ij}\\
\nonumber & \; \leq \; \left(\max_{j} \sum_{i=1}^{n} M_{ij}\right)\cdot ||\mathbf{x}||_1 
\; = \; ||\mathbf{M}||_1 \cdot ||\mathbf{x}||_1 .
\end{align}


\section{No-arbitrage pricing of general liabilities}
\label{Equilibrium for liabilities with multiple priorities}


\subsection{General liabilities of differing seniority}
\label{General liabilities with multiple priorities}

In Section \ref{Cross-ownership: a two-firm example} we considered a situation where only stocks and bonds 
could be cross-owned and where the priorities of claims in a possible liquidation were clear by default
(bonds more senior than stocks). In this section, where we consider the case with $n$ companies
($n\in\{1,2,\ldots\}$), we allow the cross-ownership of equities and more general liabilities, including debts
{\em and} derivatives. For these liabilities we allow differing priorities of the corresponding
recovery claims under liquidation.

Suppose that the vector
$\mathbf{a} = (a_i)_ {i=1,\ldots,n} \geq \mathbf{0}$ summarily denotes the exogenous assets of
market value $a_i\geq 0$ held by company $i$ ($i=1,\ldots,n$) at maturity.
These assets are exogenously priced in the sense that it is assumed that the capital structure 
of the companies $1,\ldots,n$ has no influence on the price of these assets.
The exogenous assets $\mathbf{a}$ could include physical assets like commodities or property,
the work force of a company, intellectual property, but also cash, future cash flows 
or claims on (equities or liabilities of) external firms as long as these are not affected by the 
considered $n$ firms' capital structures. We assume that the exogenous
assets or parts thereof can be sold at any time at the given price $\mathbf{a}$. A sale
of only a part of the assets would not affect the price of the remaining parts.
We will see in Section \ref{Extended exogenous assets} that the dimension of the exogenous assets'
price vector ($n$ above, and hence identical to the number of firms) is irrelevant to our considerations, and
any number of exogenous assets could be considered. We choose the dimension $n$ for convenience only.

As mentioned earlier, a left substochastic matrix can be interpreted as an {\em ownership matrix}.
Let $\mathbf{s}\in\R^n$ denote the prices of the equities of the companies $1,\ldots,n$. 
Furthermore, suppose that the equities of the companies $1,\ldots,n$ with
prices $\mathbf{s}$ are at least partially owned by the companies themselves. In particular, we assume
that company $i$ owns a proportion $0 \leq M^s_{ij} \leq 1$ of the equity of company $j$
(called the ``cross-owned fraction" in B{\o}hren and Michalsen (1994)).
This (partial) ownership is worth $M^s_{ij}s_j$. The ownership structure of the equities can therefore be described
by the left substochastic matrix $\mathbf{M}^s = (M^s_{ij})_{i,j=1,\ldots,n}$ (cf.~Ritzberger and Shorish (2002)).
The total value of any equity that company $i$ owns is given by the $i$-th entry of the vector
$\mathbf{M}^s\cdot\mathbf{s}$, i.e.~by
\begin{equation}
\sum_{j=1}^{n}M^s_{ij}s_j  . 
\end{equation}
Similarly, we can describe the cross-ownership structure of
any outstanding liabilities by means of a left substochastic matrix.

In order to properly reflect cross-ownership structures that are present in a group of firms, 
any firm that owns equities or liabilities of firms in this group, where at the same time the group owns
part of the equity or liabilities of the considered firm, should be included in the model. Firms which only
own equities or liabilities of other firms, but who's own equity or liabilities are not at least partially
owned by other firms, do not have to be considered since they are price takers in this market in the sense
that their own assets and liabilies do not influence other firms' balance sheets.

\begin{assumption}
\label{master-def_1}
Let $\mathbf{M}^s$ and $\mathbf{M}^{d,i}$ $(i=1,\ldots,m)$ be strictly left substochastic matrices. 
Let $\mathbf{a}\in (\R_0^+)^n$. For $i=1,\ldots,m$, let $\mathbf{d}^i$ be a function
\begin{eqnarray}
\R^{n(m+1)} & \longrightarrow & (\R_0^+)^n \\
\left(
\begin{array}{*{3}{l}}
\mathbf{r}^1\\
\vdots \\
\mathbf{r}^m\\
\mathbf{s}
\end{array}
\right) 
& \longmapsto & \mathbf{d}^i_{\mathbf{r}^1,\ldots,\mathbf{r}^m,\mathbf{s}} .
\end{eqnarray}
\end{assumption}

Suppose now that the $n$ elements of the vector $\mathbf{d}^i$ define the liabilities of the $n$ considered
firms, where $i=1,\ldots,m$ are the priorities of the corresponding claims in a possible liquidation of
any firm (e.g.~by Chapter 7), such that 1 is the highest priority (paid first) and $m$ is the
lowest (paid last). We assume that these liabilities are payable in cash.
The functions $\mathbf{d}^i$ are non-negative since negative liabilities are assets
(another firm's liability) and hence they will be modelled as such.
Suppose that the $\mathbf{r}^i \in \mathbb{R}^n$ 
are the vectors of recovery claims belonging to $\mathbf{d}^i$. So, while
the $\mathbf{d}^i$ describe what is supposed to be paid at maturity, the $\mathbf{r}^i$ stand for
the actual payoff, which could be after a liquidation due to the default of the writer of the liability.
In general, we should therefore have 
\begin{equation}
\mathbf{0} \leq \mathbf{r}^i \leq \mathbf{d}^i .
\end{equation}

For simplicity, we assume that there is only one liability per firm per level of seniority. 
However, as long as it was clear how to split the corresponding recovery claim in any economic scenario
of a liquidation situation in which the claim would be recovered at less than par, it would be no problem
to assume that any of these liabilities was a sum of several liabilities of the same seniority. If such
an agreement (of a split) did not exist, one could possibly assume a repayment {\em pari passu}.

A very simple version of a liabilities function $\mathbf{d}^i$ 
would be where $\mathbf{d}^i_{\mathbf{r}^1,\ldots,\mathbf{r}^m,\mathbf{s}} \equiv \mathbf{b}^i\in (\R^+_0)^n$. 
In this case, the liabilities would be simple loans, rather than general liabilities, like derivatives
that could depend on other assets like $\mathbf{s}$. 
However, since $\mathbf{d}^i_{\mathbf{r}^1,\ldots,\mathbf{r}^m,\mathbf{s}}$ has the $\mathbf{r}^i$ as 
arguments, the liability functions
can even depend on their own(!) eventual payoff (see an example in Sec.~\ref{No unique equilibrium}).
Assumption \ref{master-def_1} is also very general in the sense that {\em no restrictions on derivatives
regarding exogenous assets} have been made. This is because exogenous assets are treated as constants 
in our framework since all considerations are conditional on given prices for exogenous
assets. So, to be very clear about this and although we will not use this notation later, the
$\mathbf{d}^i$ may also depend on the exogenously priced assets $\mathbf{a}$, i.e.~one generally has
\begin{equation}
\label{liabilities a}
\mathbf{d}^i_{\mathbf{r}^1,\ldots,\mathbf{r}^m,\mathbf{s}} = 
\mathbf{d}^i_{\mathbf{r}^1,\ldots,\mathbf{r}^m,\mathbf{s},\mathbf{a}} .
\end{equation}
For an example see Section \ref{Example with stocks, bonds, and derivatives}.

Regarding the issue of strictly substochastic matrices in Assumption \ref{master-def_1} (rather than just
substochastic ones), an example in Sec.~\ref{No equilibrium under maximum cross-ownership} will illustrate 
the kind of problems that can arise if ownership matrices are not strictly substochastic. 
A more thorough discussion of the possible effects of non-strictly substochastic matrices 
is beyond the scope of this paper.


\subsection{The liquidation value equations}
\label{The equilibrium equations}

Suppose now that the matrices $\mathbf{M}^s$ and $\mathbf{M}^{d,i}$ $(i=1,\ldots,m)$ are 
ownership matrices as in Section \ref{General liabilities with multiple priorities}. 
Assume that $\mathbf{M}^s$ describes the equity
(common stock) cross-ownership in the system of $n$ firms, while the $\mathbf{M}^{d,i}$ $(i=1,\ldots,m)$
describe the cross-ownership of general liabilities, with $\mathbf{M}^{d,i}$ belonging to $\mathbf{d}^i$. 
Under this set-up and under Assumption \ref{master-def_1}, we demand the following at maturity:

\begin{assumption}[Absolute Priority Rule]
\label{equilibrium assumption 1}
The priority of claims is honored. Equity is the residual claim.
\end{assumption}

The Absolute Priority Rule, which requests strict adherence to the seniority of liabilities
in a liquidation in the sense that any higher rank claim has to be fully paid off before any lower
rank claim can be paid, is not always honored in real life. Longhofer and  Carlstrom (1995)
write:
\begin{quote}
While this rule would seem quite simple to implement, it is routinely circumvented in practice. 
In fact, bankruptcy courts themselves play a major role in abrogating this feature of debt contracts.
\end{quote}
However, the circumstances (liquidation vs.~reorganization, or political intervention) and also the
extent to which the rule is disregarded vary (see e.g.~Eberhart, Moore and Roenfeldt (1990)) 
and can not be discussed in this paper. For the purpose of our theory, Assumption 
\ref{equilibrium assumption 1} is reasonably close to practice.

Under Assumption \ref{equilibrium assumption 1}, the equations 
\begin{eqnarray}
\label{master_n}
\mathbf{r}^1 & = & \min\left\{\mathbf{d}^1_{\mathbf{r}^1,\ldots,\mathbf{r}^m,\mathbf{s}},
\quad\, \mathbf{a} + \mathbf{M}^s\cdot\mathbf{s}  
+ \sum_{i=1}^{m} \mathbf{M}^{d,i}\cdot\mathbf{r}^i \right\}\\
\nonumber & & \mbox{For $0 < j < m$:} \\
\label{master_j} 
\mathbf{r}^{j+1} & = & \min\left\{\mathbf{d}^{j+1}_{\mathbf{r}^1,\ldots,\mathbf{r}^m,\mathbf{s}},\;
\left(\mathbf{a} + \mathbf{M}^s\cdot\mathbf{s}  
+ \sum_{i=1}^{m} \mathbf{M}^{d,i}\cdot\mathbf{r}^i 
- \sum_{i=1}^{j} \mathbf{d}^i_{\mathbf{r}^1,\ldots,\mathbf{r}^m,\mathbf{s}} \right)^+ \right\} \\
\label{master_1}
\mathbf{s} & = & \qquad\qquad\qquad\quad\; \left(\mathbf{a} + \mathbf{M}^s\cdot\mathbf{s}  
+ \sum_{i=1}^{m} \mathbf{M}^{d,i}\cdot\mathbf{r}^i 
- \sum_{i=1}^{m} \mathbf{d}^i_{\mathbf{r}^1,\ldots,\mathbf{r}^m,\mathbf{s}} \right)^+ 
\end{eqnarray}
follow for the {\em liquidation values} of $\mathbf{r}^1,\ldots,\mathbf{r}^m$ and $\mathbf{s}$.
Liquidation means here that at maturity all assets are converted into cash and subsequently all
equity is paid out in cash as well. 
Because of Assumption \ref{equilibrium assumption 1}, {\em each claim on a liability will only pay the
minimum of the promised payoff and the value of the remaining assets of the firm after liabilities
of higher seniority have been paid.}
In this sense, the equations \eqref{master_n} -- \eqref{master_1} are the generalization of the 
Merton equations \eqref{s_i} and \eqref{r_i} for the multi-firm case with general liabilities under
cross-ownership. However, while in Merton's case the liquidation of equity is trivial since any
equity is identical to the remaining exogenous assets after paying the debt, it is much
less obvious how such a liquidation would work in our case due to the equity entanglement caused by
cross-ownership. For a further and more thorough discussion of the issue of liquidation see
therefore Section \ref{A comment on solvency, liquidation and arbitrage}. In that section we will also
show under fairly reasonable assumptions that the equations \eqref{master_n} -- \eqref{master_1} 
must hold for the no-arbitrage prices of equities and liabilities at maturity. Hence, no-arbitrage
prices and liquidation values are identical.

Note that $\mathbf{r}^1$ are the recovery claims with the highest priority
in a liquidation, and $\mathbf{r}^m$ are the recoveries of the lowest priority claims. 
Equity $\mathbf{s}$ is, of course, the first asset to be wiped out, then, in this order, 
the recoveries $\mathbf{r}^m$ to $\mathbf{r}^1$ get wiped out (component-wise). The model is
flexible enough to not only incorporate bonds and derivatives of differing seniority,
but also some derivatives could rank higher than some bonds.
We call a non-negative solution $(\mathbf{r}^{1*},\ldots,\mathbf{r}^{m*},\mathbf{s}^*)$
of the system \eqref{master_n} -- \eqref{master_1} a {\em no-arbitrage price equilibrium}.
In Assumption \ref{equilibrium assumption 1} we defined equity as the ``residual
claim". In practice, the asset class closest to this would be ``common stock". The asset class of
``preferred stock", which usually has more rights and a higher seniority in a default event,
would in our set-up be modelled as one of the liabilities ranking higher than equity. 

In a first attempt, one would possibly formulate the liquidation value equations 
\eqref{master_n} -- \eqref{master_1} such that the first one would read
\begin{equation}
\label{master_n_alt}
\mathbf{r}^1 \; = \; \min\left\{\mathbf{d}^1_{\mathbf{r}^1,\ldots,\mathbf{r}^m,\mathbf{s}},
\; \left(\mathbf{a} + \mathbf{M}^s\cdot\mathbf{s}  
+ \sum_{i=1}^{m} \mathbf{M}^{d,i}\cdot\mathbf{r}^i \right)^+\right\} ,
\end{equation}
which is in line with the fact that also the highest priority recipient of liquidation proceeds
can never be forced to pay (rather than receive) during the liquidation process. The lemma below
shows that the system \eqref{master_n} -- \eqref{master_1} is equivalent to the possibly more
natural approach given by the system \eqref{master_n_alt}, \eqref{master_j} and \eqref{master_1}.
We will prefer to work with \eqref{master_n} -- \eqref{master_1} since it is somewhat leaner 
and simplifies some of our proofs.

\begin{lemma}
\label{master-pos_sols}
Under Assumption \ref{master-def_1}, any solution $(\mathbf{r}^{1*},\ldots,\mathbf{r}^{m*},\mathbf{s}^*)$
of the system \eqref{master_n} -- \eqref{master_1} is non-negative. Hence,  \eqref{master_n} -- 
\eqref{master_1} and \eqref{master_n_alt}, \eqref{master_j} and \eqref{master_1} have identical solutions.
\end{lemma}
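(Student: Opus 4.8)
The plan is to establish the non-negativity claim first and then obtain the equivalence of the two systems as an immediate consequence. For the non-negativity of a solution $(\mathbf{r}^{1*},\ldots,\mathbf{r}^{m*},\mathbf{s}^*)$ I would treat the coordinates in order of decreasing seniority, that is $\mathbf{s}^*$ first, then $\mathbf{r}^{m*},\ldots,\mathbf{r}^{2*}$, and $\mathbf{r}^{1*}$ last. The reason for this order is that $\mathbf{r}^{1*}$ is the only one of these quantities whose defining equation \eqref{master_n} contains an untruncated (not $(\cdot)^+$-ed) asset expression, and all the lower-priority recoveries feed into that expression through the non-negative matrices $\mathbf{M}^{d,i}$. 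The opening steps are trivial: $\mathbf{s}^*\geq\mathbf{0}$ because it is a positive part by \eqref{master_1}, and for $j=1,\ldots,m-1$ one has $\mathbf{r}^{j+1,*}\geq\mathbf{0}$ by \eqref{master_j} because it is the minimum of $\mathbf{d}^{j+1}\geq\mathbf{0}$ (the codomain of $\mathbf{d}^{j+1}$ in Assumption \ref{master-def_1} being $(\R_0^+)^n$) and a positive part.

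The substantive step is $\mathbf{r}^{1*}\geq\mathbf{0}$. Set $\mathbf{v} := \mathbf{a} + \mathbf{M}^s\cdot\mathbf{s}^* + \sum_{i=2}^m \mathbf{M}^{d,i}\cdot\mathbf{r}^{i*}$; then $\mathbf{v}\geq\mathbf{0}$, using $\mathbf{a}\geq\mathbf{0}$, the non-negativity of the ownership matrices, and the facts just established, so \eqref{master_n} reads $\mathbf{r}^{1*} = \min\{\mathbf{d}^1,\, \mathbf{v} + \mathbf{M}^{d,1}\cdot\mathbf{r}^{1*}\}$. Assume the index set $I := \{i : r^{1*}_i < 0\}$ is non-empty. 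On $I$ the minimum is attained by the second argument (the first being $\geq 0$), so $r^{1*}_i = v_i + \sum_{j=1}^n M^{d,1}_{ij} r^{1*}_j$ for $i\in I$. Summing $|r^{1*}_i| = -r^{1*}_i$ over $i\in I$, discarding the terms $-v_i\leq 0$ and the terms $-M^{d,1}_{ij}r^{1*}_j\leq 0$ for $j\notin I$, and using $\sum_{i\in I} M^{d,1}_{ij}\leq ||\mathbf{M}^{d,1}||_1 < 1$ (strict substochasticity, cf.~\eqref{matrix-norm}) — essentially the estimate underlying \eqref{l_1} — yields $\sum_{i\in I}|r^{1*}_i| \leq ||\mathbf{M}^{d,1}||_1 \sum_{i\in I}|r^{1*}_i|$, hence $\sum_{i\in I}|r^{1*}_i| = 0$, contradicting $I\neq\emptyset$. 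Therefore $\mathbf{r}^{1*}\geq\mathbf{0}$, and the whole solution is non-negative.

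For the second assertion, once all components are non-negative the expression $\mathbf{a} + \mathbf{M}^s\cdot\mathbf{s} + \sum_i \mathbf{M}^{d,i}\cdot\mathbf{r}^i$ is non-negative, so inserting or deleting the outer $(\cdot)^+$ in \eqref{master_n_alt} versus \eqref{master_n} changes nothing; hence every solution of \eqref{master_n} -- \eqref{master_1} solves \eqref{master_n_alt}, \eqref{master_j}, \eqref{master_1}. Conversely any solution of the latter system is itself non-negative by the same reasoning (here $\mathbf{r}^1$ is trivially $\geq\mathbf{0}$, being a minimum of two non-negative quantities), so again the asset expression is non-negative and \eqref{master_n_alt} collapses to \eqref{master_n}. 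The solution sets therefore coincide.

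The only genuine obstacle is the $\mathbf{r}^{1*}\geq\mathbf{0}$ step. The subtlety there is that the potential negativity of $\mathbf{r}^{1*}$ is self-referential through the term $\mathbf{M}^{d,1}\cdot\mathbf{r}^{1*}$; the device that resolves it is to restrict to the coordinate set $I$ on which $r^{1*}_i<0$, on which the column-sum bound of a strictly substochastic matrix becomes a genuine $\ell^1$-contraction and forces $I$ to be empty. Everything else is routine sign bookkeeping.
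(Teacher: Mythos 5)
Your proof is correct, and its overall architecture matches the paper's: the non-negativity of $\mathbf{s}^*$ and $\mathbf{r}^{2*},\ldots,\mathbf{r}^{m*}$ is read off directly from the form of \eqref{master_j} and \eqref{master_1}, and the only real work is ruling out negative components of $\mathbf{r}^{1*}$ by isolating the subsystem indexed by the negative coordinates and exploiting strict substochasticity of the corresponding submatrix of $\mathbf{M}^{d,1}$. Where you diverge is in how that subsystem is killed off. The paper permutes rows and columns so the negative components sit in a block, writes the block as $\mathbf{r}' = \mathbf{c} + \mathbf{M}'\cdot\mathbf{r}'$ with $\mathbf{c}\geq\mathbf{0}$, and invokes Lemma \ref{M-lemma} (non-negativity of $(\mathbf{I}-\mathbf{M}')^{-1}$ via the Neumann series) to conclude $\mathbf{r}'\geq\mathbf{0}$, contradicting $\mathbf{r}'<\mathbf{0}$. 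You instead sum $|r^{1*}_i|$ over the negative index set $I$ and use the column-sum bound $\sum_{i\in I}M^{d,1}_{ij}\leq\|\mathbf{M}^{d,1}\|_1<1$ to get $\sum_{i\in I}|r^{1*}_i|\leq\|\mathbf{M}^{d,1}\|_1\sum_{i\in I}|r^{1*}_i|$, forcing $I=\emptyset$. The two closings are morally the same fact (the Neumann series converges precisely because $\|\mathbf{M}'\|_1<1$), but yours is more elementary: it needs no matrix inversion, no appeal to the appendix lemma, and no permutation bookkeeping. You are also more explicit than the paper on the second assertion, checking both directions of the equivalence between \eqref{master_n}--\eqref{master_1} and the variant with \eqref{master_n_alt}; the paper states this as an immediate consequence, and your verification of the converse direction (that solutions of the alternative system are themselves non-negative, so the outer positive part collapses) is a worthwhile addition.
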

For the proof of the lemma we will use the following notation.
Let $\pi$ be a permutation (i.e.~a bijection)
on $\{1,\ldots,n\}$ and $\mathbf{a}\in\R^{n}$ and $\mathbf{M}\in\R^{n\times n}$. 
We denote ${_{\pi}\mathbf{a}} = (a_{\pi(1)}, \ldots, a_{\pi(n)})$, and we denote
${_{\pi\pi}\mathbf{M}}$ for the matrix obtained from $\mathbf{M}$ by permuting elements 
such that ${_{\pi\pi}M_{ij}}= M_{\pi(i)\pi(j)}$ for $i,j\in\{1,\ldots,n\}$. The latter
is a simultaneous permutation of rows and columns.

\begin{proof}
Let $(\mathbf{r}^{1*},\ldots,\mathbf{r}^{m*},\mathbf{s}^*)$ be a solution of 
\eqref{master_n} -- \eqref{master_1}. Hence,
$\mathbf{r}^{2*},\ldots,\mathbf{r}^{m*},\mathbf{s}^*\in(\R_0^+)^n$.
Consider now Eq.~\eqref{master_n}, and substitute with a solution $(\mathbf{r}^{1*},\ldots,\mathbf{r}^{m*},\mathbf{s}^*)$
which contains at least one $r^{1*}_i < 0$ ($i\in\{1,\ldots,n\}$). 
By the help of a permutation $\pi$ on $\{1,\ldots,n\}$, we can now re-arrange rows and columns
of this new system such that it has the equivalent form 
\begin{equation}
{_{\pi}\mathbf{r}}^{1*} \; = \;
\min\left\{{_{\pi}(\mathbf{d}^1_{\mathbf{r}^{1*},\ldots,\mathbf{r}^{m*},\mathbf{s}^*})},
\; {_{\pi}\mathbf{a}} + {_{\pi\pi}\mathbf{M}}^s\cdot {_{\pi}\mathbf{s}}^*  
+ \sum_{i=1}^{m} {_{\pi\pi}\mathbf{M}}^{d,i}\cdot {_{\pi}\mathbf{r}}^{i*} \right\} ,
\end{equation}
where rows 1 to $j$ ($j\in\{1,\ldots,n\}$) are non-negative, and rows $j+1$ to $n$ are negative. 
The matrices ${_{\pi\pi}\mathbf{M}}^s$ and ${_{\pi\pi}\mathbf{M}}^{d,i}$
are again strictly left substochastic matrices since 
sums of columns are still less than 1. Consider now the subsystem that
consists of the negative rows from $j+1$ to $n$. Row $k$ ($j+1\leq k \leq n$) has the form
\begin{equation}
{_{\pi}r}^{1*}_k  = {_{\pi}a}_k + \sum_{l=1}^{n} {_{\pi\pi}M^s_{kl}}\cdot{_{\pi}s}^*_{l} 
+ \sum_{i=1}^{m}\sum_{l=1}^{n} {_{\pi\pi}M^{d,i}_{kl}}\cdot{_{\pi}r}^{i*}_{l} .
\end{equation}
However, since 
${_{\pi}\mathbf{r}}^{2*},\ldots,{_{\pi}\mathbf{r}^{m*}},{_{\pi}\mathbf{s}}^*\in(\R_0^+)^n$
and ${_{\pi}r}^{1*}_{l} \geq 0$ for $1\leq l \leq j$, we can write
\begin{equation}
\label{n-j-sysm}
{_{\pi}r}^{1*}_k  = c_k + \sum_{l=j+1}^{n} {_{\pi\pi}M^{d,1}_{kl}}\cdot{_{\pi}r}^{1*}_{l} 
\end{equation}
for some $c_k \geq 0$. Define now 
\begin{eqnarray}
\mathbf{r}' & = & ({_{\pi}r}^{1*}_{j+1},\ldots,{_{\pi}r}^{1*}_{n})^t , \\
\mathbf{c} & = & (c_{j+1},\ldots,c_{n})^t , \\
\mathbf{M}' & = & ({_{\pi\pi}M^{d,1}_{kl}})_{k,l=j+1,\ldots,n} . 
\end{eqnarray}
The system \eqref{n-j-sysm} (for $j+1\leq k \leq n$) can now be written as
\begin{equation}
\label{n-j-sys2m}
\mathbf{r}' = \mathbf{c} + \mathbf{M}' \cdot \mathbf{r'} ,
\end{equation}
where $\mathbf{r'} < \mathbf{0}$ and $\mathbf{c} \geq \mathbf{0}$. Since the matrix $\mathbf{M}'$
as the lower right $(n-j)\times (n-j)$-submatrix of ${_{\pi\pi}\mathbf{M}^{d,1}}$ is again a
strictly left substochastic matrix, we can apply Lemma \ref{M-lemma} of Section 
\ref{A result for substochastic matrices} to the system
\begin{equation}
\label{n-j-sys3m}
\mathbf{r'} = (\mathbf{I} - \mathbf{M}')^{-1} \cdot \mathbf{c} ,
\end{equation}
which is equivalent to \eqref{n-j-sys2m}. 
Since $(\mathbf{I} - \mathbf{M}')^{-1}$ is non-negative
according to Lemma \ref{M-lemma}, Eq.~\eqref{n-j-sys3m} is a contradiction.
\end{proof}


\subsection{Accounting equations}
\label{The balance sheet equation}

Before we turn to results about the existence of no-arbitrage price equilibria in Section 
\ref{Existence and uniqueness results},
it is useful to first consider the balance sheet equations under no-arbitrage, as well as measures of
leverage and cross-ownership (in Sec.~\ref{Measures of leverage and cross-ownership}). 

Under no-arbitrage, the accounting equations (or balance sheet equations),
\begin{eqnarray}
\label{balance sheet}
\mathbf{a} + \mathbf{M}^s\cdot\mathbf{s}  
+ \sum_{i=1}^{m} \mathbf{M}^{d,i}\cdot\mathbf{r}^i
& = & 
\mathbf{s} + \sum_{i=1}^{m} \mathbf{r}^i \\
\nonumber \text{assets $+$ receivables} & = & \text{equity $+$ payable liabilities} ,
\end{eqnarray}
follow directly from applying Eq.~\eqref{capcons} in Lemma \ref{remark1} in Section 
\ref{A result for the sum of certain differences}
to the right side of the sum of the system \eqref{master_n} -- \eqref{master_1}. 
Under no-arbitrage, Eq.~\eqref{balance sheet} holds componentwise (per firm) with all components being
non-negative.
It therefore also holds in absolute terms ($\ell_1$-terms) for the total capital in the considered 
system:
\begin{eqnarray}
\label{total_conservation_2}
\left|\left| \mathbf{a}\right|\right|_1 + \left|\left|\mathbf{M}^s\cdot\mathbf{s}\right|\right|_1  
+ \sum_{i=1}^{m} \left|\left|\mathbf{M}^{d,i}\cdot\mathbf{r}^i \right|\right|_1
& = & 
\left|\left| \mathbf{s}\right|\right|_1 + \sum_{i=1}^{m} \left|\left|\mathbf{r}^i \right|\right|_1 \\
\nonumber \text{total assets $+$ total receivables} & = & \text{total equity $+$ total payable liabilities}.
\end{eqnarray}
The right side of \eqref{total_conservation_2} represents all claims on equity and liabilities
of the financial firms $1,\ldots,n$. The left side, apart from $\left|\left| \mathbf{a}\right|\right|_1$,
represents all claims on equity and liabilities that are claimed {\em within} the system of the firms
$1,\ldots,n$ itself. Expressed differently, the following mathematically trivial 
(from \eqref{total_conservation_2}) but from an economic perspective certainly important
equation,
\begin{eqnarray}
\label{Modigliani-Miller_eq}
\underbrace{\left|\left| \mathbf{s}\right|\right|_1 
+ \sum_{i=1}^{m} \left|\left|\mathbf{r}^i \right|\right|_1} 
\; - \;
\left( \underbrace{ \left|\left|\mathbf{M}^s\cdot\mathbf{s}\right|\right|_1  
+ \sum_{i=1}^{m} \left|\left|\mathbf{M}^{d,i}\cdot\mathbf{r}^i \right|\right|_1 } \right)
& = & 
\left|\left| \mathbf{a}\right|\right|_1  \\
\nonumber \text{all claims} \qquad - \qquad\quad \text{internally held claims} \qquad
& = & \text{exogenous assets},  
\end{eqnarray}
means that under no-arbitrage the sum of the values of {\em all externally held claims} regarding the 
financial firms $1,\ldots,n$ is identical to the value of {\em all exogenous} assets owned by these firms. 
It is remarkable that ownership structures and the amount of financial leverage therefore have no
influence on the aggregate value of externally held assets regarding these companies. 
In other terms, all asset value that stems from internal leverage (internal leverage will be properly defined
in Eq.~\eqref{L} below) is contained within the system and hence somewhat irrelevant to outsiders who hold
claims of this system but who are not themselves partially owned or financially liable to this system. 
Although being relevant for a system of firms only, and not for individual firms, 
this seems to be an interesting {\em principle of capital structure irrelevance}. 
We therefore state it as a separate theorem. 

\begin{theorem}
\label{CSI}
Under the Assumptions \ref{master-def_1} and \ref{equilibrium assumption 1},
the no-arbitrage value of all externally held claims (liabilities and equity) belonging to a group 
of firms is identical to the no-arbitrage value of the exogenous assets owned by the group:
\begin{equation}
\label{MM}
\text{value of externally held claims} = \text{value of exogenous assets}.
\end{equation}
The aggregate value of the externally held claims
is therefore independent of any other aspects of the financial structure of this group.
\end{theorem}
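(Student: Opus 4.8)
The plan is to read off the claimed identity directly from the accounting equation \eqref{balance sheet}, which has already been established to hold componentwise at any no-arbitrage price equilibrium. First I would fix an arbitrary no-arbitrage price equilibrium $(\mathbf{r}^{1*},\ldots,\mathbf{r}^{m*},\mathbf{s}^*)$, i.e.\ a non-negative solution of \eqref{master_n} -- \eqref{master_1}; by Lemma \ref{master-pos_sols} all of $\mathbf{r}^{1*},\ldots,\mathbf{r}^{m*},\mathbf{s}^*$ are non-negative, and $\mathbf{a}\geq\mathbf{0}$ by Assumption \ref{master-def_1}. Since $\mathbf{M}^s$ and the $\mathbf{M}^{d,i}$ are non-negative, the vector of total firm assets $\mathbf{a} + \mathbf{M}^s\cdot\mathbf{s}^* + \sum_{i=1}^{m}\mathbf{M}^{d,i}\cdot\mathbf{r}^{i*}$ is non-negative as well, so the capital-conservation identity of Lemma \ref{remark1} (Eq.~\eqref{capcons}) applies to the waterfall on the right-hand sides of \eqref{master_n} -- \eqref{master_1} and yields \eqref{balance sheet}, componentwise and with every term non-negative.

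Next I would pass to $\ell^1$-norms. Because every vector appearing in \eqref{balance sheet} is non-negative, the $\ell^1$-norm of each side equals the sum of the $\ell^1$-norms of its summands, so \eqref{balance sheet} gives \eqref{total_conservation_2}, and the purely algebraic rearrangement \eqref{Modigliani-Miller_eq} follows at once. It then remains to match the two bracketed groups in \eqref{Modigliani-Miller_eq} to the quantities named in the statement. The term $\left|\left|\mathbf{s}\right|\right|_1 + \sum_{i=1}^{m}\left|\left|\mathbf{r}^i\right|\right|_1$ is the total no-arbitrage value of \emph{all} claims on the firms $1,\ldots,n$: each unit of equity and each recovery claim $\mathbf{r}^i$ is counted exactly once, at its no-arbitrage price (the recovery value $\mathbf{r}^{i*}$, not the nominal liability $\mathbf{d}^i$). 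For the subtrahend, I would observe that column $j$ of $\mathbf{M}^s$ records the fractions of firm $j$'s equity held inside the group, so the internally held part of firm $j$'s equity is worth $(\sum_{i=1}^{n} M^s_{ij})\, s_j$; summing over $j$ and using non-negativity gives exactly $\left|\left|\mathbf{M}^s\cdot\mathbf{s}\right|\right|_1$, and the same argument for each $\mathbf{M}^{d,i}$ together with $\mathbf{r}^i$ identifies $\left|\left|\mathbf{M}^s\cdot\mathbf{s}\right|\right|_1 + \sum_{i=1}^{m}\left|\left|\mathbf{M}^{d,i}\cdot\mathbf{r}^i\right|\right|_1$ with the aggregate value of all claims held \emph{within} the group. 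Hence ``all claims $-$ internally held claims $=$ externally held claims'', so \eqref{Modigliani-Miller_eq} reads precisely \eqref{MM}; the final assertion of the theorem is immediate, since the right-hand side $\left|\left|\mathbf{a}\right|\right|_1$ depends only on the exogenous assets and on no other feature of the financial structure (ownership matrices, liability functions, seniority).

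I do not expect an analytic obstacle here; the content is bookkeeping, and the single point that genuinely needs care — which I would make explicit — is the decomposition ``all claims $=$ internally held $+$ externally held'' and the absence of double counting. Concretely, one must value every claim at its recovery value, and one must note that it is exactly the componentwise non-negativity of $\mathbf{s}^*$, the $\mathbf{r}^{i*}$, $\mathbf{a}$ and the ownership matrices that turns the column-sum interpretation of $\mathbf{M}^s$ and the $\mathbf{M}^{d,i}$ into genuine $\ell^1$-norms, with no sign cancellations. Everything else is a direct consequence of \eqref{balance sheet} and the additivity of $\left|\left|\cdot\right|\right|_1$ on non-negative vectors.
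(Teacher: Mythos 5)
Your proposal is correct and follows essentially the same route as the paper: derive the componentwise accounting equation \eqref{balance sheet} from Lemma \ref{remark1} applied to the waterfall in \eqref{master_n} -- \eqref{master_1}, pass to $\ell^1$-norms via non-negativity to obtain \eqref{total_conservation_2}, and rearrange to \eqref{Modigliani-Miller_eq}, which is the stated identity. Your explicit attention to the non-negativity needed for the $\ell^1$-additivity and to the column-sum interpretation of the ownership matrices matches (and slightly sharpens) the paper's own presentation.
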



\subsection{Measures of leverage and cross-ownership}
\label{Measures of leverage and cross-ownership}

From Eq.~\eqref{Modigliani-Miller_eq} in the previous section it is fairly obvious that a high
degree of cross-ownership within a group of financial firms (i.e.~when the
matrices $\mathbf{M}^{\cdot}$ have large entries; for instance, all column sums could be close to 1)
leads to artificially large balance sheets when compared with the
underlying exogenous asset base. In this section we therefore look at
measures of financial leverage and cross-ownership under no-arbitrage.

It is clear that the standard measure of financial leverage in the case of a
single firm, the debt-to-equity ratio (it should be {\em liabilities}-to-equity anyway), 
is not very useful if we consider a
whole group of firms and if we want to assess the degree of their
(internal) leverage as a group. In fact, Eq.~\eqref{Modigliani-Miller_eq} shows
that even in the case of no liabilities at all, the cross-ownership of
equity alone could cause balances to be a
large multiple of the value of the underlying exogenous assets. We therefore
need to find a meaningful measure of the {\em internal} leverage for a group
of financial firms.
Assume now $\left|\left| \mathbf{a}\right|\right|_1 > 0$, and define 
\begin{eqnarray}
\label{L}
L & = & \frac{\left|\left|\mathbf{M}^s\cdot\mathbf{s}\right|\right|_1  
+ \sum_{i=1}^{m} \left|\left|\mathbf{M}^{d,i}\cdot\mathbf{r}^i \right|\right|_1}
{\left|\left|\mathbf{a} \right|\right|_1} ,\\
& = &  \frac{\text{internally held claims}}{\text{exogenous assets}} \; 
\stackrel{\eqref{MM}}{=} \;  
\frac{\text{internally held claims}}{\text{externally held claims}} .
\end{eqnarray}
This value can be interpreted as the level of {\em internal financial leverage} in the 
considered system. Note that $L=L(\mathbf{a})$ since not only $||\mathbf{a}||_1$, but
also $\mathbf{s}$ and the $\mathbf{r}^i$ depend on $\mathbf{a}$.
Furthermore, by \eqref{total_conservation_2},
\begin{eqnarray}
\label{Lb}
L + 1 & = &  
\frac{\left|\left| \mathbf{s}\right|\right|_1 + \sum_{i=1}^{m} \left|\left|\mathbf{r}^i \right|\right|_1}
{\left|\left|\mathbf{a} \right|\right|_1} ,\\
\label{Lbb}
& = & \frac{\text{total claims}}{\text{externally held claims}} 
\; = \; \frac{\text{total assets}}{\text{exogenous assets}} .
\end{eqnarray}
Related to \eqref{L} is for 
$\left|\left| \mathbf{s}\right|\right|_1 + \sum_{i=1}^{m} \left|\left|\mathbf{r}^i \right|\right|_1 > 0$
the value defined by (cf.~\eqref{Modigliani-Miller_eq})
\begin{eqnarray}
\label{I}
I & = & \frac{\left|\left|\mathbf{M}^s\cdot\mathbf{s}\right|\right|_1  
+ \sum_{i=1}^{m} \left|\left|\mathbf{M}^{d,i}\cdot\mathbf{r}^i \right|\right|_1}
{\left|\left| \mathbf{s}\right|\right|_1 + \sum_{i=1}^{m} \left|\left|\mathbf{r}^i \right|\right|_1} 
\; = \; \frac{L}{L+1}  ,\\
\label{MMLast} & = &  \frac{\text{internally held claims}}{\text{total claims}} ,
\end{eqnarray}
which could be seen as a measure of the {\em degree of financial cross-ownership}. We could also
consider this to be a measure of financial inbreeding or self-excitement.
Because of $||\mathbf{M}^s||_1 , ||\mathbf{M}^{d,i}||_1 \in [0,1)$ for 
$i=1,\ldots,m$ (cf.~Eq.~\eqref{matrix-norm}), it follows from \eqref{l_1} that for
\begin{equation}
\label{I^max}
I^{\max} = \max\{|| \mathbf{M}^s||_1, ||\mathbf{M}^{d,1}||_1,\ldots,||\mathbf{M}^{d,m}||_1 \} 
\end{equation}
one has 
\begin{equation}
\label{I<=I^max}
0 \; \leq \; I \; \leq \; I^{\max} \; < \; 1 .
\end{equation}
Since $L=\frac{I}{1-I}$ and since $L$ is a strictly monotonically increasing function of $I$, 
it follows immediately that for 
\begin{equation}
\label{L^max}
L^{\max}=\frac{I^{\max}}{1-I^{\max}}
\end{equation}
one has 
\begin{equation}
\label{L<=L^max}
0 \; \leq \; L \; \leq \; L^{\max} \; < \; +\infty 
\end{equation}
and 
\begin{equation}
\label{upper_balance_boundary}
\left|\left| \mathbf{s}\right|\right|_1 + \sum_{i=1}^{m} \left|\left|\mathbf{r}^i \right|\right|_1
\; \stackrel{\eqref{Lb}}{=} \; (L+1)||\mathbf{a}||_1 \; \leq \; (L^{\max}+1)||\mathbf{a}||_1 .
\end{equation}
The upper boundary $L^{\max}$ is sharp, as an example in Section \ref{L_max is sharp} demonstrates.
Another straightforward conclusion from the Equations \eqref{I<=I^max}
and \eqref{total_conservation_2} is that $||\mathbf{a}||_1 = 0$ is equivalent to 
$||\mathbf{s}||_1 = ||\mathbf{r}^1||_1 = \ldots = ||\mathbf{r}^m||_1 = 0$.

The value $L+1$ is the value of the sum of the balance sheets of all firms expressed in 
terms of the sum of all exogenous assets (cf.~Eq.~\eqref{Lbb}). 
It seems therefore plausible that the higher $L$, or, equivalently, the lower 
the percentage of exogenous assets in the balance sheets, the higher the risk for the 
balance sheets that stems from these exogenous assets in absolute terms. Such risk could for instance
materialize in the form of an instantaneous shock in the prices of the exogenous assets just before
(at) maturity. The values $L$ and $I$ could therefore also be seen as measures of 
{\em systemic risk} or {\em financial contagion} in the absence of liquidity risk
(see also Section \ref{A comment on solvency, liquidation and arbitrage}).  
As we pointed out earlier, $L = L(\mathbf{a})$ and $I = I(\mathbf{a})$ (see also the example
in Table \ref{tab:2}). Therefore, the considered
measures are calculated for one particular scenario for the exogenous assets. For a physical
probability measure, $\mathbb{P}$, or for an equivalent martingale measure, $\mathbb{Q}$, for instance
taken from a stochastic model for $\mathbf{a}$ (cf.~Sec.~\ref{Equilibria for non-zero maturities}), 
one method to obtain such measures for all scenarios at the same time could be to calculate the
expectations
\begin{equation}
\label{measure before maturity}
\E_{\mathbb{P}}[L],\; \E_{\mathbb{Q}}[L] \;\leq\; L^{\max} , \quad\text{or}\quad
\E_{\mathbb{P}}[I],\; \E_{\mathbb{Q}}[I], \;\leq\; I^{\max} .
\end{equation}
It is clear that no-arbitrage prices before
maturity, obtained for instance by risk-neutral pricing (cf.~Sec.~\ref{Equilibria for non-zero maturities}),
would already reflect any risk from systemic cross-ownership and leverage.
See Sec.~\ref{Equilibria for non-zero maturities} also for a further idea how measures
of cross-ownership could be obtained at times before maturity.

In summary it can be said that {\em under no-arbitrage cross-ownership does not overstate equity} 
-- the equity is there, according to no-arbitrage prices. However, cross-ownership can somewhat 
artificially inflate balance sheets (and potentially share prices) in comparison to the
underlying exogenous assets. Cross-ownership should therefore be an integral part of any measure of 
market leverage. For instance, Table \ref{tab:2} shows the values of $L$ for the
example in Section \ref{Cross-ownership: a two-firm example} (Table \ref{tab:1}). The
first row of values demonstrates from left to right how decreasing exogenous assets prices but 
increasing leverage lead to the same no-arbitrage prices of equities and liabilities, i.e.~the
total assets on the balance sheets in the system are identical in all three cases.
\begin{table}[htb]
\caption{Example of Sec.~\ref{Cross-ownership: a two-firm example} for $b_1=b_2=100$}
\label{tab:2}       
\vspace{2mm}
\begin{center}
\begin{tabular}{|c|c||c|c||c|c||c|c|}
\hline
\multicolumn{2}{|c||}{no XOS} & \multicolumn{2}{c||}{50\% stock XOS}  & \multicolumn{2}{c||}{50\% bond XOS} & \multicolumn{2}{c|}{no-arbitrage prices}\\
$a_1=a_2$ & $L$ & $a_1=a_2$ & $L$ & $a_1=a_2$ & $L$ & $s_1=s_2$ &  $r_1=r_2$ \\
\hline
150 & 0 & 125 & 0.2 & 100 & 0.5  & 50 & 100 \\
100 & 0 & 100 & 0 &  50 & 1 &  0 & 100 \\
 50 & 0 &  50 & 0 &  25 & 1 &  0 &  50 \\
\hline
\end{tabular}
\end{center}
\end{table}
In Section \ref{Preliminaries}, we quoted Ritzberger and Shorish (2002) 
with the statement ``that under cross-ownership the book value of a firm will 
tend to be overestimated with respect to the underlying cash flows." This 
would be exactly the case here if
we considered discounted cashflows for the exogenous assets. In that sense, cross-ownership
can cause leverage that not only increases the risk of contagion, but there also exists a moral
issue as far as some investors might not be aware of the kind of {\em hidden leverage} cross-ownership
of equity can cause. However, as we have seen in Theorem \ref{CSI}, 
{\em external owners of claims and equities are in aggregate not affected by
cross-ownership leverage.}

In contrast to our measure of internal leverage, for $||\mathbf{s}||_1 > 0$, a measure for
the {\em external financial leverage} in the system could be defined as
the ratio of payable externally held liabilities to externally held equity
(cf.~{\em debt-to-equity}),
\begin{eqnarray}
\label{L_ex}
L_\text{ex} & = & \frac{
\sum_{i=1}^{m} \left|\left|\mathbf{r}^i \right|\right|_1 
-
\sum_{i=1}^{m} \left|\left|\mathbf{M}^{d,i}\cdot\mathbf{r}^i \right|\right|_1
}{
\left|\left| \mathbf{s}\right|\right|_1 - \left|\left|\mathbf{M}^s\cdot\mathbf{s}\right|\right|_1
} \\
& = & \frac{\text{externally held liabilities}}{\text{externally held equity}} .
\end{eqnarray}
However, in the light of Theorem \ref{CSI}, for external claimholders, external leverage 
is essentially as irrelevant as internal one since there is no influence on the value of all externally
held assets, which is constant $\left|\left| \mathbf{a}\right|\right|_1$. 

While we are not considering measures for the leverage of individual companies here, they certainly
serve an important purpose. However, on a market-wide scope, internal leverage as defined in
\eqref{L} is possibly the most relevant measure of leverage, even though it matters only within a
group of firms, and not outside.


\subsection{Existence and uniqueness results}
\label{Existence and uniqueness results}

\begin{assumption}
\label{assumption_1}
The functions $\mathbf{d}^i$ of Assumption \ref{master-def_1} are continuous for $i=1,\ldots,m$.
\end{assumption}

\begin{assumption}
\label{master-def_2}
For $i=1,\ldots,m$ and $j=1,\ldots,n$,
\begin{equation}
\label{master-l-def}
\mathbf{d}^i_{\mathbf{r}^1,\ldots,\mathbf{r}^m,\mathbf{s}} = 
\left(
\begin{array}{*{3}{l}}
\psi^i_j\left(
\sum_{k=1}^{n} M^s_{jk}s_k  + \sum_{l=1}^{m} \sum_{k=1}^{n} M^{d,l}_{jk}r^{l}_k 
\right)
\end{array}
\right)_{j=1,\ldots,n}
\end{equation}
where $\psi^i_j: \R \rightarrow \R^+_0$ are monotonically increasing functions such that for 
any $\mathbf{y}^1,\mathbf{y}^2\in\R^n$ with $\mathbf{y}^1 \geq \mathbf{y}^2$
\begin{equation}
\label{master-l-cond}
\mathbf{y}^1-\mathbf{y}^2 \; \geq \; \sum_{i=1}^{m} 
\left(\psi^i_j(y^1_j) - \psi^i_j(y^2_j)\right)_{j=1,\ldots,n} .
\end{equation}
\end{assumption}
Assumption \ref{master-def_2} is obviously much stronger than Assumption \ref{assumption_1}.
Condition \eqref{master-l-cond} alone is a stronger condition than Lipschitz continuity. 
Note that while Assumption \ref{master-def_2} restricts allowed liabilities (derivatives) 
in the sense that they can only be written 
honoring restriction \eqref{master-l-def} with regards to the
sum of the endogenous assets owned by the underwriter itself, the assumption is still liberal
in the sense that no restrictions
on derivatives regarding exogenous assets have been made (see also Eq.~\eqref{liabilities a} and 
the remarks there). In that sense, we could as well write $\psi^i_j={_{\mathbf{a}}\psi^i_j}$ 
in Assumption \ref{master-def_2}. For example, liabilities, where 
$\mathbf{d}^i_{\mathbf{r}^1,\ldots,\mathbf{r}^m,\mathbf{s},\mathbf{a}} \equiv 
\mathbf{b}^i(\mathbf{a})\in (\R^+_0)^n$, fulfill Assumption \ref{master-def_2}
(see also Sec.~\ref{Example with stocks, bonds, and derivatives} for an example). 

The following theorem is this paper's main result on the existence and uniqueness of no-arbitrage
prices in the presence of cross-ownership of equities and general liabilities of differing seniority.

\begin{theorem}
\label{master-theo}
Under Assumption \ref{master-def_1}, the following hold:
\begin{enumerate}
\item The system \eqref{master_n} -- \eqref{master_1} can only have non-negative solutions. 
\item For any solution of \eqref{master_n} -- \eqref{master_1}, the size of the sum of all balance
sheets is less than or equal to $(L^{\max}+1)||\mathbf{a}||_1$, where $L^{\max}$ is as in \eqref{L^max}.
\item Under the additional Assumption \ref{assumption_1}, the system 
\eqref{master_n} -- \eqref{master_1} has at least one solution. 
\item Under the additional Assumption \ref{master-def_2}, the solution of
\eqref{master_n} -- \eqref{master_1} is unique, i.e.~all 
endogenous assets are derivatives of the exogenous assets. The implicit function
\begin{equation}
\label{derivative}
\Psi:
\mathbf{a}
\longrightarrow 
\left(
\begin{array}{*{3}{l}}
\mathbf{r}^{1*}(\mathbf{a})\\
\vdots \\
\mathbf{r}^{m*}(\mathbf{a})\\
\mathbf{s}^*(\mathbf{a})
\end{array}
\right) 
\end{equation}
(the `derivative') that maps the exogenous assets $\mathbf{a}$ on the solution of 
\eqref{master_n} -- \eqref{master_1} is Lebesgue-measurable.
\end{enumerate}
\end{theorem}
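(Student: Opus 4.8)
Write $\mathbf{x}=(\mathbf{r}^1,\ldots,\mathbf{r}^m,\mathbf{s})\in\R^{n(m+1)}$ and let $\Phi_{\mathbf{a}}(\mathbf{x})$ denote the right‑hand side of \eqref{master_n}--\eqref{master_1}, so a no‑arbitrage price equilibrium is precisely a fixed point of the self‑map $\Phi_{\mathbf{a}}$ of $\R^{n(m+1)}$. Part (1) is just Lemma \ref{master-pos_sols}. For part (2), observe that the telescoping identity of Lemma \ref{remark1} (Eq.~\eqref{capcons}) applied to the right‑hand side of \eqref{master_n}--\eqref{master_1} yields, at any fixed point, the balance‑sheet equation \eqref{balance sheet}; hence the sum of all balance‑sheet totals is $\left|\left|\mathbf{s}\right|\right|_1+\sum_{i=1}^m\left|\left|\mathbf{r}^i\right|\right|_1$, which by \eqref{upper_balance_boundary} is $\leq (L^{\max}+1)\left|\left|\mathbf{a}\right|\right|_1$. (If $\left|\left|\mathbf{a}\right|\right|_1=0$ all components vanish and the bound is trivial.)

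For part (3) the plan is to invoke Brouwer's fixed point theorem on the compact convex set $K=\{\mathbf{x}\geq\mathbf{0}:\ \left|\left|\mathbf{s}\right|\right|_1+\sum_{i=1}^m\left|\left|\mathbf{r}^i\right|\right|_1\leq (L^{\max}+1)\left|\left|\mathbf{a}\right|\right|_1\}$. Continuity of $\Phi_{\mathbf{a}}$ on $K$ is immediate from Assumption \ref{assumption_1} together with continuity of $\min$, $(\cdot)^+$ and of the linear maps. The substantive point is $\Phi_{\mathbf{a}}(K)\subseteq K$: for $\mathbf{x}\geq\mathbf{0}$ every coordinate of $\Phi_{\mathbf{a}}(\mathbf{x})$ is a minimum or positive part of non‑negative quantities, so $\Phi_{\mathbf{a}}(\mathbf{x})\geq\mathbf{0}$; and, writing $\mathbf{y}=\mathbf{M}^s\mathbf{s}+\sum_{i=1}^m\mathbf{M}^{d,i}\mathbf{r}^i$ (the ``internally held claims'' of each firm), the same telescoping identity gives $\left|\left|\Phi_{\mathbf{a}}(\mathbf{x})\right|\right|_1=\left|\left|\mathbf{a}+\mathbf{y}\right|\right|_1\leq\left|\left|\mathbf{a}\right|\right|_1+I^{\max}\left(\left|\left|\mathbf{s}\right|\right|_1+\sum_{i=1}^m\left|\left|\mathbf{r}^i\right|\right|_1\right)\leq (L^{\max}+1)\left|\left|\mathbf{a}\right|\right|_1$, using \eqref{l_1}, \eqref{I^max} and $I^{\max}(L^{\max}+1)=L^{\max}$ from \eqref{L^max}. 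Brouwer then produces a fixed point.

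For part (4), under Assumption \ref{master-def_2} both the total assets $\mathbf{a}+\mathbf{y}$ and all the liability vectors $\mathbf{d}^i=(\psi^i_j(y_j))_{j}$ depend on $\mathbf{x}$ only through $\mathbf{y}$; thus $\Phi_{\mathbf{a}}(\mathbf{x})=G_{\mathbf{a}}(\mathbf{y})$, where $G_{\mathbf{a}}$ is the waterfall built from assets $\mathbf{a}+\mathbf{y}$ and liabilities $\psi^i(\mathbf{y})$, and $\mathbf{x}\mapsto\mathbf{y}$ is linear with $\left|\left|\mathbf{y}\right|\right|_1\leq I^{\max}\left|\left|\mathbf{x}\right|\right|_1$ by \eqref{l_1}. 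The crux is that $G_{\mathbf{a}}$ is $1$‑Lipschitz from $(\R^n,\left|\left|\cdot\right|\right|_1)$ to $(\R^{n(m+1)},\left|\left|\cdot\right|\right|_1)$: for each firm $j$ the outputs $r^1_j,\ldots,r^m_j,s_j$ are each monotone increasing functions of the single real variable $y_j$ — monotonicity of the $\psi^i_j$ gives this for the promised liabilities, while condition \eqref{master-l-cond} forces monotonicity of the running remainders $\bigl(a_j+y_j-\sum_{l\leq i}\psi^l_j(y_j)\bigr)^+$ — and by the telescoping identity their sum is identically $a_j+y_j$, so all their increments carry the sign of the increment of $y_j$ and $\sum_i|\Delta r^i_j|+|\Delta s_j|=|\Delta y_j|$. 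Composing, $\left|\left|\Phi_{\mathbf{a}}(\mathbf{x})-\Phi_{\mathbf{a}}(\mathbf{x}')\right|\right|_1\leq I^{\max}\left|\left|\mathbf{x}-\mathbf{x}'\right|\right|_1$ with $I^{\max}<1$, so $\Phi_{\mathbf{a}}$ is a strict contraction on the complete space $\R^{n(m+1)}$; the Banach fixed point theorem gives a unique fixed point $\Psi(\mathbf{a})=\lim_{k\to\infty}\Phi_{\mathbf{a}}^k(\mathbf{0})$, which is uniqueness (every endogenous asset is then a genuine function of $\mathbf{a}$). For measurability, note that $(\mathbf{a},\mathbf{x})\mapsto\Phi_{\mathbf{a}}(\mathbf{x})$ is jointly continuous, hence each $\mathbf{a}\mapsto\Phi_{\mathbf{a}}^k(\mathbf{0})$ is continuous, and $\Psi$ is their pointwise limit, therefore Borel and a fortiori Lebesgue‑measurable (indeed, since the contraction rate $I^{\max}$ is uniform in $\mathbf{a}$, the convergence is locally uniform and $\Psi$ is even continuous).

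I expect the $1$‑Lipschitz estimate for $G_{\mathbf{a}}$ to be the delicate step. A naive coordinatewise triangle‑inequality bound on the waterfall picks up a factor that grows with the number $m$ of seniority classes and destroys the contraction; one really must combine the conservation (telescoping) identity with the monotonicity forced by \eqref{master-l-cond} to recover the sharp constant $1$, after which strict substochasticity of the ownership matrices supplies the contraction factor $I^{\max}<1$ and the rest is routine.
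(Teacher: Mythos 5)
Your proposal is correct and follows the paper's architecture almost exactly: parts (1) and (2) are handled the same way, part (3) via Brouwer on the same invariant $\ell^1$-ball using the estimate $\left|\left|\Phi_{\mathbf{a}}(\mathbf{x})\right|\right|_1\leq\left|\left|\mathbf{a}\right|\right|_1+I^{\max}\left|\left|\mathbf{x}\right|\right|_1$, and part (4) via Banach with the contraction factor $I^{\max}<1$ supplied by strict substochasticity. The one place where you genuinely diverge is the step you correctly identified as delicate: the paper establishes the per-firm identity $\sum_{i}|\Delta r^i_j|+|\Delta s_j|=|\Delta y_j|$ as a separate technical lemma (Lemma \ref{stuff-nn}) proved by an exhaustive six-case analysis of where $x+y$ falls relative to the cumulative sums $\sum_{l\leq i}\psi^l_j$, whereas you derive it conceptually: each tranche of the waterfall is monotone non-decreasing in $y_j$ (monotonicity of the $\psi^i_j$ for the promised amounts, condition \eqref{master-l-cond} for the running remainders), and the conservation identity of Lemma \ref{remark1} forces their sum to equal $a_j+y_j$, so all increments are non-negative and sum to $\Delta y_j$. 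Your argument is shorter, makes transparent exactly where \eqref{master-l-cond} is used, and yields the same equality (not merely the inequality you need); the paper's case analysis buys nothing extra here. Your parenthetical observation that the uniform contraction rate makes $\Psi$ continuous (not merely Lebesgue-measurable) is also a mild strengthening of the paper's measurability conclusion, which is obtained there as a pointwise limit of continuous iterates.
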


The first part of Theorem \ref{master-theo} is Lemma \ref{master-pos_sols}. 
The second part follows directly from Eq.~\eqref{upper_balance_boundary}.
Proofs for the other parts and an algorithm for the solution in part four can be found in 
Section \ref{Proof and algorithm}.
The fourth part means that under Assumption \ref{master-def_2}, 
similarly as in the original Merton model, all (recovery) claims on equities and
liabilities are derivatives of the underlying exogenously priced assets. 
The payout of a derivative is, as always, understood to be a
function of the underlying asset(s), i.e.~the payout is {\em uniquely} determined by the value of the
underlying. The existence of a unique solution for the system \eqref{master_n} -- \eqref{master_1}
therefore implies the function \eqref{derivative}.
For instance, if $\mathbf{a}$ was replaced by a vector of random variables (i.e.~a stochastic model
for the endogenously priced assets), then this derivative
would be a random variable too due to Lebesgue-measurability.
Obviously, it is here assumed that the ownership matrices $\mathbf{M}^s$ and $\mathbf{M}^{d,1},\ldots,\mathbf{M}^{d,m}$ and other features of \eqref{master_n} -- \eqref{master_1}
are fixed -- as they would be in a real life situation. 
Since $\Psi$ is a Lebesgue-measurable 
derivative of $\mathbf{a}$ that uniquely determines the no-arbitrage prices of equity and recovery
claims in terms of the exogenous assets, this means that, as in the original Merton valuation model
for one firm, the theory of risk-neutral pricing can be applied for the valuation of equities and
liabilities in the multi-firm case with cross-ownership. In particular, this means that no-arbitrage prices 
{\em before} maturity can be determined (see Sec.~\ref{Equilibria for non-zero maturities}).

Clearly, Assumption \ref{master-def_2} is a restriction that one would want to weaken as much as possible. 
However, for uniqueness, Assumption \ref{assumption_1} alone is not
enough, as an example with more than one no-arbitrage equilibrium in Section \ref{No unique equilibrium} will show
(see also Sec.~\ref{A comment on uniqueness}).


\subsection{A comment on liquidation and no-arbitrage}
\label{A comment on solvency, liquidation and arbitrage}

Equation \eqref{balance sheet} implies that
\begin{equation}
\label{netting}
\mathbf{a} + \sum_{i=1}^{m} \mathbf{M}^{d,i}\cdot\mathbf{r}^i - \sum_{i=1}^{m} \mathbf{r}^i
= 
\mathbf{s} - \mathbf{M}^s\cdot\mathbf{s} .
\end{equation}
Netting or paying all recovery claims (payable liabilities) at maturity is
no problem if $\mathbf{M}^s=\mathbf{0}$, even if some (or all) firms are in default.
The reason is that in this case the right side of \eqref{netting} is non-negative, which
means that the left side implies that, for any of the $n$ companies, exogenous assets plus receivable 
recovery claims are sufficient to pay the recovery claims on all liabilities they have written.
Hence, no sale or liquidation of equity is needed to pay for the cash liabilities. 

If for any company the corresponding row of \eqref{netting} was identical zero,
then exogenous assets and receivables would exactly cover the payable liabilities -- a sign that
the firm's equity was zero and that it might be in default.

Problems arise if some components of the right side of \eqref{netting} are negative,
since in this case selling the exogenous assets (which can be treated like cash
for this purpose) and receiving cash claims will not be enough to pay the recovery claims belonging
to the liabilities of these firms. This means that assets in the form of equity, 
i.e.~components of $\mathbf{M}^s\cdot\mathbf{s}$, have to be liquidated
in order to pay for liabilities. 
This is only possible if there either is a buyer (inside or outside the group of firms) willing to buy
at the determined no-arbitrage prices, or if equity owners (shareholders) are able to
cash in on equity by liquidating their share of the firm's assets.
In Section \ref{The equilibrium equations} this problem was avoided by considering directly
liquidation values, i.e.~implicitly it was assumed that all $n$ firms got liquidated at maturity.
Such a liquidation could happen as follows. A bank, e.g.~a central bank, 
lends each firm an amount of the size of the firm's balance sheet according to
no-arbitrage prices. Each company then pays out liabilities and
equity as summarized by the right side of the balance sheet \eqref{balance sheet}. Hence, each company
receives equities and claims as listed on the left side of the balance sheet, and it can also sell
the exogenous assets. Each bank loan is then paid back using the proceeds (left side
equals right side). It is assumed that this happens instantaneously and no interest is paid
on the bank loans. The result in the end is that all values in liabilites and equities as determined
by the liquidation value equations \eqref{master_n} -- \eqref{master_1}
have been paid off. Note that in this case of a bank-buffered 
liquidation there is no need to sell any assets other than the exogenous ones outside the 
group of considered firms.

We now turn to an alternative approach of justifying the liquidation value equations 
\eqref{master_n} -- \eqref{master_1} as equations for prices under no-arbitrage by a 
no-arbitrage argument in which a complete liquidation is
not necessarily required. However, at least the {\em threat} of (partial) liquidation is
necessary to make the argument work.

\begin{assumption}
\label{equilibrium assumption 2}
As long as there are outstanding liabilities of a firm $i$ $(i=1,\ldots,m)$, any remaining assets
of firm $i$ have to be liquidated in order to pay these.
\end{assumption}

\begin{assumption}
\label{equilibrium assumption 3}
Because $\mathbf{M}^s$ and $\mathbf{M}^{d,i}$ $(i=1,\ldots,m)$ are {\em strictly} left substochastic 
matrices, we assume that there is a market {\em outside} the group of $n$ firms that instantaneously
trades fractions of the liabilities and equities that are not owned within the group of $n$ firms
at maturity. In this market, {\em any} amount of liabilities or equities of the group of $n$ firms,
or perfect replications thereof, can be sold at prevailing
prices. Assume that this market also trades the exogenous assets and that shortselling is allowed. 
Any amount of cash can be borrowed instantaneously at no interest.
Further, assume that this market is free of (instantaneous) arbitrage.
\end{assumption}

\begin{assumption}
\label{equilibrium assumption 4}
Any participant in the market of Assumption \ref{equilibrium assumption 3} who owns equity of any
of the $n$ firms is allowed to liquidate her part of the equity. 
\end{assumption}

\begin{assumption}
\label{equilibrium assumption 5}
The group of $n$ firms participates in the market of Assumption \ref{equilibrium assumption 3}
and sells any amount of equity at prevailing prices if asked to do so. Furthermore, any participant
in the market who owns the entire equity of any of the $n$ firms can liquidate this firm. 
\end{assumption}

We now check that the Assumptions \ref{equilibrium assumption 1},
\ref{equilibrium assumption 2}, \ref{equilibrium assumption 3}, and \ref{equilibrium assumption 4}
(first set of assumptions), alternatively the Assumptions  \ref{equilibrium assumption 1},
\ref{equilibrium assumption 2}, \ref{equilibrium assumption 3}, and 
\ref{equilibrium assumption 5} (second set of assumptions), 
lead to the equations \eqref{master_n} -- \eqref{master_1}
for the no-arbitrage prices of equities and liabilities in the market of 
Assumption \ref{equilibrium assumption 3}.

Under both sets of assumptions, the right hand sides of the no-arbitrage equations
\eqref{master_n} -- \eqref{master_1} correctly describe the payoffs of any liabilities, 
respectively the liquidation value of any equity. Therefore, no-arbitrage prices of any recovery
claims, which are paid in cash, have to equal the amounts paid. Hence, 
equations \eqref{master_n} and \eqref{master_j} apply to these prices.
For equities, the case where equity costs less than the liquidation value directly leads
to arbitrage by liquidation due to Assumption \ref{equilibrium assumption 4} or
Assumption \ref{equilibrium assumption 5}. However, in the case of 
Assumption \ref{equilibrium assumption 5}, the market agent would have to borrow enough cash
to buy {\em all} the equity of the corresponding firm, while under Assumption \ref{equilibrium assumption 4}
any fractional amount would suffice. If equity is overvalued compared to the 
liquidation value (right hand side of Eq.~\eqref{master_1} for the row corresponding to that firm), 
then a participant can
(proportionally) replicate the right hand side of Eq.~\eqref{master_1} and sell this
product in the market at higher than purchasing price, creating instant arbitrage. 
For this argument we have to assume that market participants are rational in the sense that
they do not discriminate between actual equity and perfectly replicated equity
(cf.~Assumption \ref{equilibrium assumption 3}).

In summary, it seems appropriate to say that the equations 
\eqref{master_n} -- \eqref{master_1} describe the prices of equities and liabilities
under the assumption of {\em no instantaneous arbitrage} given the possibility of (partial) liquidation.
The no-arbitrage arguments that were used obviously relate to Merton's comment on capital-structure
arbitrage in Mitchell (2004) (cf.~Section \ref{Cross-ownership: a two-firm example}).

It is clear that a violation of either the no-arbitrage equations \eqref{master_n} -- \eqref{master_1}, 
or the balance sheet equations \eqref{balance sheet} would mean either the violation of contractual
law (the priority of claims), or the manipulation
of balance sheets against better knowledge, for instance by ignoring current market prices of the
assets on that balance sheet, creating an opportunity of instantaneous arbitrage. In this sense, 
a no-arbitrage equilibrium determined by \eqref{master_n} -- \eqref{master_1} is indisputable from a 
theoretical point of view. However, a main ingredient of our argument is the possibility of
a liquidation, potentially a liquidation of all firms involved. This is of course also true for the
original Merton approach, but seems less rigorous there since only one firm with a very simple
capital structure is affected.
Nonetheless, we would argue that any pricing approach conflicting with the one outlined in this paper
could lead to significant problems at some stage since inconsistent pricing at one instance could 
lead to even bigger inconsistencies over longer periods of time.


\subsection{A comment on uniqueness}
\label{A comment on uniqueness}

We will see in Section \ref{No unique equilibrium} that there are examples when
Assumption \ref{assumption_1} holds but no unique price equilibrium exists.
In such cases, endogenous asset values are no derivatives of the underlying exogenous assets 
in the sense of an implicit function $\Psi$ as in \eqref{derivative}, since the considered relation is not
uniquely determined on the right hand side.
In such situations, the market can not price the assets in an indisputable and rational manner. 
Prices at maturity but also before maturity are indeterminable. 
This is a very unhealthy market situation because of the uncertainty
it creates. It is our opinion that financial contracts or derivatives that cause such situations
should be illegal.
In this sense, this theory could be useful to regulators in order to outrule such or similar 
situations and in order to assess these kind of situations before they become a real world problem.

Note that this problem is different from the problem of the incompleteness of a market.
In the case of incompleteness, payoffs, that can not be replicated, can have a range of prices
within which no arbitrage can occur, i.e.~the no-arbitrage price is not unique, and usually the
choice of one equivalent martingale measure (EMM) out of a range of possible EMMs leads to one
specific no-arbitrage price system. In the case of the example above,
even the choice of one specific EMM is of no use since there exists no uniquely determined
expectation of the discounted payoffs, because the payoffs themselves are not uniquely determined given
any state of the world. 
However, the situation is comparable with incompleteness to some extent since, 
in the case of non-unique no-arbitrage prices, the counterparties could agree on one specific 
no-arbitrage price set for each possible scenario of the exogenous assets, 
similar to the choice of one equivalent martingale measure in the incomplete market case. 
It is clear that the task of determining the set of all
priceable or allowed liabilities (derivatives) in a given system is equivalent to the question of the
weakest conditions for the liabilities functions
$\mathbf{d}^i$ that still allow for a unique solution of the no-arbitrage equations 
\eqref{master_n} -- \eqref{master_1}.


\section{Proof and algorithm}
\label{Proof and algorithm}


\subsection{Third part of Theorem \ref{master-theo}}

Important ingredients for the proof of the third part of Theorem \ref{master-theo} are the
Brouwer--Schauder Fixed Point Theorem (Theorem \ref{Brouwer} in Section 
\ref{Banach's Contraction Theorem}) and Lemma \ref{remark1} 
of Section \ref{A result for the sum of certain differences}.

\begin{proof}
Any solution of the system \eqref{master_n} -- \eqref{master_1} is a fixed point of 
the mapping $\Phi:$ 
\begin{equation}
\label{Phi}
\left(
\begin{array}{*{3}{l}}
\mathbf{r}^1\\
\mathbf{r}^2\\
\vdots \\
\mathbf{r}^m\\
\mathbf{s}
\end{array}
\right) 
\longmapsto 
\left(\!\!
\begin{array}{*{5}{l}}

\min\{\mathbf{d}^1_{\mathbf{r}^1,\ldots,\mathbf{r}^m,\mathbf{s}},
\;\; \mathbf{a} + \mathbf{M}^s\cdot\mathbf{s}  
+ \sum_{i=1}^{m} \mathbf{M}^{d,i}\cdot\mathbf{r}^i \}\\

\min\{\mathbf{d}^{2}_{\mathbf{r}^1,\ldots,\mathbf{r}^m,\mathbf{s}},\;
(\mathbf{a} + \mathbf{M}^s\cdot\mathbf{s}  
+ \sum_{i=1}^{m} \mathbf{M}^{d,i}\cdot\mathbf{r}^i 
- \mathbf{d}^1_{\mathbf{r}^1,\ldots,\mathbf{r}^m,\mathbf{s}})^+ \} \\

\vdots \\

\min\{\mathbf{d}^{m}_{\mathbf{r}^1,\ldots,\mathbf{r}^m,\mathbf{s}},\;
(\mathbf{a} + \mathbf{M}^s\cdot\mathbf{s}  
+ \sum_{i=1}^{m} \mathbf{M}^{d,i}\cdot\mathbf{r}^i 
- \sum_{i=1}^{m-1} \mathbf{d}^i_{\mathbf{r}^1,\ldots,\mathbf{r}^m,\mathbf{s}} )^+ \} \\

\qquad \qquad\qquad\;\;\, (\mathbf{a} + \mathbf{M}^s\cdot\mathbf{s}  
+ \sum_{i=1}^{m} \mathbf{M}^{d,i}\cdot\mathbf{r}^i 
- \sum_{i=1}^{m} \mathbf{d}^i_{\mathbf{r}^1,\ldots,\mathbf{r}^m,\mathbf{s}} )^+ 
\end{array}
\!\!\right) 
\end{equation}
and vice versa.
Because of Lemma \ref{master-pos_sols}, we only need to consider $\Phi$ on $(\mathbb{R}_0^+)^{n(m+1)}$.
Because of Eq.~\eqref{capcons} in Lemma \ref{remark1} (cf.~Section 
\ref{The balance sheet equation}), one has with $I^{\max}$ as in \eqref{I^max} and for
$\mathbf{r}^1,\mathbf{r}^2,\ldots,\mathbf{r}^m,\mathbf{s}\in (\mathbb{R}_0^+)^{n}$ that
\begin{eqnarray}
\label{Phi_norm}
\left|\left| 
\Phi
\left(
\begin{array}{l}
\mathbf{r}^{1}\\
\vdots \\
\mathbf{r}^{m}\\
\mathbf{s}
\end{array}
\right) 
\right|\right|_1 
& = & 
\left|\left| \mathbf{a} + \mathbf{M}^s\cdot\mathbf{s}  
+ \sum_{i=1}^{m} \mathbf{M}^{d,i}\cdot\mathbf{r}^i \right|\right|_1 \\
\nonumber & = &
||\mathbf{a}||_1 +  \left|\left|
\mathbf{M}^s\cdot \mathbf{s} \right|\right|_1
+ \sum_{i=1}^{m} \left|\left|\mathbf{M}^{d,i}\cdot \mathbf{r}^{i}
\right|\right|_1\\
\nonumber & \stackrel{\eqref{l_1}}{\leq} &
||\mathbf{a}||_1 + \left|\left|
\mathbf{M}^s\right|\right|_1\cdot \left|\left|\mathbf{s} \right|\right|_1
+ \sum_{i=1}^{m} \left|\left|\mathbf{M}^{d,i}\right|\right|_1\cdot
\left|\left|\mathbf{r}^{i}
\right|\right|_1\\
\nonumber & \leq & ||\mathbf{a}||_1 + I^{\max}\cdot 
\left(\left|\left|\mathbf{s} \right|\right|_1
+ \sum_{i=1}^{m}\left|\left|\mathbf{r}^{i}\right|\right|_1\right)\\
\nonumber & = & 
||\mathbf{a}||_1 +  
I^{\max}\cdot 
\left|\left|
\left(
\begin{array}{l}
\mathbf{r}^{1}\\
\vdots \\
\mathbf{r}^{m}\\
\mathbf{s}
\end{array}
\right) 
\right|\right|_1  .
\end{eqnarray}
From \eqref{Phi_norm}, and since $L^{\max} + 1 = \frac{1}{1+I^{\max}}$, we furthermore obtain that if
\begin{equation}
\label{Brouwer_1}
\mathbf{0}
\;\leq\;
\left|\left|
\left(
\begin{array}{l}
\mathbf{r}^{1}\\
\vdots \\
\mathbf{r}^{m}\\
\mathbf{s}
\end{array}
\right) 
\right|\right|_1 
\;\leq \;
(L^{\max}+1)||\mathbf{a}||_1  
\end{equation}
then
\begin{equation}
\label{Brouwer_2}
\mathbf{0}
\;\leq \;
\left|\left| 
\Phi
\left(
\begin{array}{l}
\mathbf{r}^{1}\\
\vdots \\
\mathbf{r}^{m}\\
\mathbf{s}
\end{array}
\right) 
\right|\right|_1 
\;\leq \;
(L^{\max}+1)||\mathbf{a}||_1 .
\end{equation}
Under Assumption \ref{assumption_1}, $\Phi$ (cf.~\eqref{Phi}) is continuous and \eqref{Brouwer_1} and 
\eqref{Brouwer_2} mean that we can apply the Brouwer--Schauder Fixed Point Theorem by considering $\Phi$
on the compact subset of $(\R_0^+)^{n(m+1)}$ defined by \eqref{Brouwer_1}. The system 
\eqref{master_n} -- \eqref{master_1} has 
therefore at least one non-negative solution with a balance sheet size of less than or equal to
$(L^{\max}+1)||\mathbf{a}||_1$.
\end{proof}


\subsection{Fourth part of Theorem \ref{master-theo} and algorithm}
\label{Fourth part of Theorem and algorithm}

The main ingredients for the proof of the fourth part of Theorem \ref{master-theo} are 
Banach's Contraction Mapping
Theorem, Lemma \ref{master-contractions_lemma} below, and Lemma \ref{stuff-nn} 
of Section \ref{A result for the sum of certain differences}.
Readers who are less familiar with the Contraction Mapping Theorem and the notion of a
contraction in general
might want to have a look at Section \ref{Banach's Contraction Theorem} 
(Definition \ref{Contraction} and Theorem \ref{Banach}) before reading on.

\begin{lemma}
\label{master-contractions_lemma}
Under Assumptions \ref{master-def_1} and \ref{master-def_2}, the mapping $\Phi$ 
in \eqref{Phi} is a strict contraction on $\R^{n(m+1)}$ under the metric implied by the 
$\ell^1$-norm on $\R^{n(m+1)}$.
\end{lemma}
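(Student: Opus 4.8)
The plan is to show that $\Phi$ contracts the $\ell^1$-distance between any two points $\mathbf{x} = (\mathbf{r}^1,\ldots,\mathbf{r}^m,\mathbf{s})$ and $\mathbf{x}' = (\mathbf{r}'^1,\ldots,\mathbf{r}'^m,\mathbf{s}')$ by a factor strictly less than $1$. The natural candidate for the contraction constant is $I^{\max}$ from \eqref{I^max}, which is strictly below $1$ by \eqref{I<=I^max}. The key structural observation is that each component of $\Phi(\mathbf{x})$ is built from the common ``available assets'' vector
\begin{equation}
\mathbf{v}(\mathbf{x}) \; = \; \mathbf{a} + \mathbf{M}^s\cdot\mathbf{s} + \sum_{i=1}^{m} \mathbf{M}^{d,i}\cdot\mathbf{r}^i
\end{equation}
together with partial sums of the liability functions $\mathbf{d}^i_{\mathbf{r}^1,\ldots,\mathbf{r}^m,\mathbf{s}}$, all composed with the Lipschitz-$1$ operations $\min\{\cdot,\cdot\}$ and $(\cdot)^+$. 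So first I would isolate the elementary fact that for scalars, $|\min\{a,b\}-\min\{a',b'\}| \leq \max\{|a-a'|,|b-b'|\}$ and similarly for the positive part, and more usefully that these maps are $1$-Lipschitz coordinatewise; this lets me bound each block of $\|\Phi(\mathbf{x})-\Phi(\mathbf{x}')\|_1$.

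The main obstacle is that a naive componentwise Lipschitz bound is too lossy: the summands $\mathbf{M}^{d,i}\cdot\mathbf{r}^i$ and the liability terms $\mathbf{d}^i$ appear in several rows simultaneously (row $j+1$ subtracts $\sum_{i\le j}\mathbf{d}^i$, etc.), so adding up crude per-row estimates would overcount and fail to give a constant below $1$. The fix is to exploit the telescoping/absolute-priority structure: the total $\ell^1$-mass of $\Phi(\mathbf{x})$ equals exactly $\|\mathbf{v}(\mathbf{x})\|_1$ by Lemma~\ref{remark1} (Eq.~\eqref{capcons}), which is precisely what powered \eqref{Phi_norm}. I expect to need a \emph{signed} refinement of that identity: comparing $\Phi(\mathbf{x})$ and $\Phi(\mathbf{x}')$ row by row, the differences across the $m+1$ blocks should again telescope, so that $\|\Phi(\mathbf{x})-\Phi(\mathbf{x}')\|_1$ is controlled by $\|\mathbf{v}(\mathbf{x})-\mathbf{v}(\mathbf{x}')\|_1$ plus a correction governed by how much the liability functions themselves move. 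This is exactly where Assumption~\ref{master-def_2} enters: condition \eqref{master-l-cond} says the $\mathbf{d}^i = (\psi^i_j(\cdot))_j$ depend only on the per-firm endogenous holdings $\sum_k M^s_{jk}s_k + \sum_{l,k}M^{d,l}_{jk}r^l_k$ (i.e.\ on $\mathbf{v}(\mathbf{x}) - \mathbf{a}$, componentwise), and that the sum $\sum_i(\psi^i_j(y^1_j)-\psi^i_j(y^2_j))$ is dominated by $y^1_j - y^2_j$ whenever $y^1\ge y^2$. So the liability movements are themselves absorbed into the movement of $\mathbf{v}$, and Lemma~\ref{stuff-nn} (cited as the other ingredient) should be the precise tool that packages ``$\Phi$'s output moves by at most as much as $\mathbf{v}$ moves'' into a clean inequality.

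Concretely, the steps in order: (1) reduce to $(\R_0^+)^{n(m+1)}$ via Lemma~\ref{master-pos_sols}, and note the $\psi^i_j$ are defined on all of $\R$ so the domain issue is harmless; (2) set $\mathbf{u} = \mathbf{v}(\mathbf{x})$, $\mathbf{u}' = \mathbf{v}(\mathbf{x}')$ and, using \eqref{l_1} and the definition \eqref{I^max}, record $\|\mathbf{u}-\mathbf{u}'\|_1 \le \|\mathbf{M}^s\|_1\|\mathbf{s}-\mathbf{s}'\|_1 + \sum_i \|\mathbf{M}^{d,i}\|_1\|\mathbf{r}^i-\mathbf{r}'^i\|_1 \le I^{\max}\|\mathbf{x}-\mathbf{x}'\|_1$; (3) show $\|\Phi(\mathbf{x})-\Phi(\mathbf{x}')\|_1 \le \|\mathbf{u}-\mathbf{u}'\|_1$, which is the crux — here I would invoke Lemma~\ref{stuff-nn} together with \eqref{master-l-cond}, arguing that the map sending the pair $(\mathbf{u},\text{liabilities derived from }\mathbf{u})$ to the vector of recoveries and residual equity is $1$-Lipschitz in $\|\cdot\|_1$, because the recoveries partition the mass of $\mathbf{u}$ in a priority-ordered way and \eqref{master-l-cond} prevents the liability terms from amplifying perturbations; (4) chain (2) and (3) to get $\|\Phi(\mathbf{x})-\Phi(\mathbf{x}')\|_1 \le I^{\max}\|\mathbf{x}-\mathbf{x}'\|_1$ with $I^{\max}<1$, which is the claimed strict contraction. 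If the signed telescoping in step (3) proves delicate, a fallback is to prove it first in the loan case $\mathbf{d}^i \equiv \mathbf{b}^i$ (where the partial sums are constants and the $1$-Lipschitz claim is transparent) and then perturb, using \eqref{master-l-cond} exactly to control the extra terms; but I expect the direct argument via Lemma~\ref{stuff-nn} to go through.
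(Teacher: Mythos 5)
Your proposal is correct and follows essentially the same route as the paper: reduce to the per-firm comparison of the endogenous-holdings values $y^1,y^2$, invoke Lemma \ref{stuff-nn} with condition \eqref{master-l-cond} to show the priority waterfall is $1$-Lipschitz (in fact the paper gets the exact identity $\sum_j |g_{k+nj}| = |h_k|$, so your step (3) holds with equality), and then bound $\|\mathbf{v}(\mathbf{x})-\mathbf{v}(\mathbf{x}')\|_1 \le I^{\max}\|\mathbf{x}-\mathbf{x}'\|_1$ via the column-sum norm. The only small deviation is your step (1): the paper proves the contraction directly on all of $\R^{n(m+1)}$ (as the lemma requires for Banach's theorem on a complete space), with no restriction to the positive orthant needed, though as you note this is harmless since the $\psi^i_j$ are defined on all of $\R$.
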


\begin{proof}
Let $\mathbf{r}^{1,1},\ldots,\mathbf{r}^{m,1},\mathbf{s}^1,
\mathbf{r}^{1,2},\ldots,\mathbf{r}^{m,2},\mathbf{s}^2 \in \R^n$. 
Define $\mathbf{g}\in \R^{n(m+1)}$ as
\begin{equation}
\mathbf{g} 
=
\Phi
\left(
\begin{array}{l}
\mathbf{r}^{1,1}\\
\vdots \\
\mathbf{r}^{m,1}\\
\mathbf{s}^1
\end{array}
\right) 
\; - \;
\Phi
\left(
\begin{array}{l}
\mathbf{r}^{1,2}\\
\vdots \\
\mathbf{r}^{m,2}\\
\mathbf{s}^2
\end{array}
\right) ,
\end{equation}
and $\mathbf{h}\in \R^n$ as
\begin{eqnarray}
\mathbf{h} & = & \mathbf{M}^s\cdot\mathbf{s}^1  + \sum_{i=1}^{m} \mathbf{M}^{d,i}\cdot\mathbf{r}^{i,1} 
- \left(\mathbf{M}^s\cdot\mathbf{s}^2  + \sum_{i=1}^{m} \mathbf{M}^{d,i}\cdot\mathbf{r}^{i,2}\right) \\
\nonumber & = & 
\mathbf{M}^s\cdot (\mathbf{s}^1-\mathbf{s}^2)  
+ \sum_{i=1}^{m} \mathbf{M}^{d,i}\cdot (\mathbf{r}^{i,1}-\mathbf{r}^{i,2}) .
\end{eqnarray}
For $k\in\{1\ldots,n\}$, define $x=a_k$, and for $l=1,2$ define
\begin{equation}
y^l = \sum_{j=1}^{n} M^s_{kj}s^l_j  + \sum_{i=1}^{m} \sum_{j=1}^{n} M^{d,i}_{kj}r^{i,l}_j . 
\end{equation}
This means that $h_k = y^1 - y^2$ and that
\begin{eqnarray}
g_{k} & = & \min\left\{\psi^{1}_k(y^1),\; x + y^1 \right\} - \min\left\{\psi^{1}_k(y^2),\; x + y^2 \right\}\\
\nonumber & & \mbox{For $0 < j < m$:} \\
g_{k+nj} & = & \min\left\{\psi^{j+1}_k(y^1),\; \left(x + y^1 - \sum_{i=1}^{j} \psi^i_k(y^1) \right)^+ \right\} \\
\nonumber & & - \min\left\{\psi^{j+1}_k(y^2),\; \left(x + y^2 - \sum_{i=1}^{j} \psi^i_k(y^2) \right)^+ \right\}\\ 
g_{k+nm} & = & \left(x + y^1 - \sum_{i=1}^{m} \psi^i_k(y^1) \right)^+ 
- \left(x + y^2 - \sum_{i=1}^{m} \psi^i_k(y^2) \right)^+
\end{eqnarray}
Lemma \ref{stuff-nn} in Section \ref{A result for the sum of certain differences}, in conjunction with
Assumption \ref{master-def_2}, now implies that
\begin{equation}
\sum_{j=0}^{m} |g_{k+nj}| = |h_k| .
\end{equation}
Therefore $||\mathbf{g}||_1 = ||\mathbf{h}||_1$. With $I^{\max}$ as in \eqref{I^max} and \eqref{I<=I^max},
\begin{eqnarray}
||\mathbf{g}||_1 
& = & \left|\left| 
\Phi
\left(
\begin{array}{l}
\mathbf{r}^{1,1}\\
\vdots \\
\mathbf{r}^{m,1}\\
\mathbf{s}^1
\end{array}
\right) 
\; - \;
\Phi
\left(
\begin{array}{l}
\mathbf{r}^{1,2}\\
\vdots \\
\mathbf{r}^{m,2}\\
\mathbf{s}^2
\end{array}
\right)
\right|\right|_1 
\; = \; ||\mathbf{h}||_1 \\
\nonumber & = & 
\left|\left|
\mathbf{M}^s\cdot (\mathbf{s}^1-\mathbf{s}^2)  
+ \sum_{i=1}^{m} \mathbf{M}^{d,i}\cdot (\mathbf{r}^{i,1}-\mathbf{r}^{i,2})
\right|\right|_1\\
\nonumber & \leq &
\left|\left|
\mathbf{M}^s\cdot (\mathbf{s}^1-\mathbf{s}^2) \right|\right|_1
+ \sum_{i=1}^{m} \left|\left|\mathbf{M}^{d,i}\cdot (\mathbf{r}^{i,1}-\mathbf{r}^{i,2})
\right|\right|_1\\
\nonumber & \leq &
\left|\left|
\mathbf{M}^s\right|\right|_1\cdot \left|\left|(\mathbf{s}^1-\mathbf{s}^2) \right|\right|_1
+ \sum_{i=1}^{m} \left|\left|\mathbf{M}^{d,i}\right|\right|_1\cdot
\left|\left|(\mathbf{r}^{i,1}-\mathbf{r}^{i,2})
\right|\right|_1\\
\nonumber & \leq & I^{\max}\cdot 
\left(\left|\left|(\mathbf{s}^1-\mathbf{s}^2) \right|\right|_1
+ \sum_{i=1}^{m}\left|\left|(\mathbf{r}^{i,1}-\mathbf{r}^{i,2})\right|\right|_1\right)\\
\nonumber & = & 
I^{\max}\cdot 
\left|\left|
\left(
\begin{array}{l}
\mathbf{r}^{1,1}\\
\vdots \\
\mathbf{r}^{m,1}\\
\mathbf{s}^1
\end{array}
\right) 
\; - \;
\left(
\begin{array}{l}
\mathbf{r}^{1,2}\\
\vdots \\
\mathbf{r}^{m,2}\\
\mathbf{s}^2
\end{array}
\right)
\right|\right|_1  .
\end{eqnarray}
\end{proof}

\subsubsection*{Proof and algorithm}

\begin{proof}
Given Lemma \ref{master-contractions_lemma}, Banach's Contraction Mapping Theorem 
(Section \ref{Banach's Contraction Theorem}, Theorem \ref{Banach}) immediately implies 
that the system \eqref{master_n} -- \eqref{master_1} has a unique solution.
To obtain the unique solution of system \eqref{master_n} -- \eqref{master_1} under Assumption
\label{Algorithm for the unique solution in Theorem}
\ref{master-def_2}, the recursion \eqref{algo} of the Contraction Mapping Theorem 
(also sometimes called {\em Picard iteration}) can be used with $\Phi$ as in \eqref{Phi}. 
To better account for the dependency of \eqref{Phi} on $\mathbf{a}$,
we denote \eqref{Phi} now by $\Phi_{\mathbf{a}}$. Hence,
\begin{equation}
\label{master-algo}
\Psi(\mathbf{a}) \equiv 
\lim_{m\rightarrow \infty} \Phi^m_{\mathbf{a}}(\cdot) = \lim_{m\rightarrow \infty} 
\underbrace{\Phi_{\mathbf{a}}\circ\ldots\circ \Phi_{\mathbf{a}}}_m (\cdot) . 
\end{equation}
Regarding the proof of Lebesgue-measurability, it is straightforward to show that the function
$\Phi_{\cdot}(\cdot)$ is a continuous map $\R^n\times\R^{n(m+1)} \rightarrow \R^{n(m+1)}$.
Similarly, it is straigtforward that the function (note the somewhat sloppy notation)
$\Phi_{\cdot}^m(\cdot): \R^n\times\R^{n(m+1)} \rightarrow \R^{n(m+1)}$ ($m=1,2,\ldots$), which is 
obtained from 
\begin{equation}
\label{kd}
\Phi_{\mathbf{a}}^m(\cdot) = \underbrace{\Phi_{\mathbf{a}}\circ\cdots\circ\Phi_{\mathbf{a}}}_{m}(\cdot) 
\end{equation}
by replacing $\mathbf{a}$ by $\cdot$, 
is continuous too. Continuity implies Lebesgue-measurability. However, 
we know that $\Phi_{\mathbf{a}}^m(\cdot)$ has the non-functional limit
\eqref{master-algo}, hence $\lim_{m\rightarrow \infty} \Phi_{\cdot}^m(\cdot) = \Psi_{\cdot}$.
A point-wise limit of measurable functions is measurable, 
therefore $\Psi_{\cdot}=\Psi_{\cdot}(\cdot)$
is Lebesgue-measurable on $\R^n\times\R^{n(m+1)}$, with the second argument being irrelevant. 
It follows now from basic measure theory that $\Psi_{\cdot}$ is Lebesgue-measurable on $\R^n$.
\end{proof}


\section{Applications and examples}
\label{Examples}


\subsection{No-arbitrage prices before maturity}
\label{Equilibria for non-zero maturities}

So far, only the existence of no-arbitrage prices {\em at} maturity has been considered. Assume
now that the exogenous assets are given by an $n$-dimensional price process $\mathbf{a}(t)$ where 
$t\in\mathbb{T}$, with $\min \mathbb{T} = 0$ being the present time,
and where $\max \mathbb{T} = T$ is the time of maturity. 
Irrespective of the particular structure of $\mathbb{T}$, we assume that
$\mathbf{a}(\cdot)$ is adapted to a filtration $(\mathcal{F}_t)_{t\in\mathbb{T}}$ that lives on a 
probability space $(\Omega, \F_T, \mathbb{P})$. For convenience, assume that $a_1(\cdot)$ is a 
num\'eraire, i.e.~non-dividend paying and almost surely strictly positive. Assume now that $\mathbb{Q}$
is an equivalent martingale measure (EMM) for the price process $\mathbf{a}(\cdot)$, i.e.~by the 
Fundamental Theorem of Asset Pricing, we assume
that the market spanned by $\mathbf{a}(\cdot)$ is arbitrage-free. Furthermore,
consider a payoff $X$ at maturity $T$, given by a random variable on $(\Omega, \F_T, \mathbb{P})$.
Assuming sufficient integrability, or even boundedness where necessary, the Fundamental Theorem
of Asset Pricing implies that for $t\in\mathbb{T}$ the risk-neutral valuation formula
\begin{equation}
\label{Risk-neutral Valuation Formula_1}
X(t) = a_1(t)\mathbf{E}_{\mathbb{Q}}\left[\left.\frac{X}{a_1(T)}\right|\mathcal{F}_t\right]
\end{equation}
defines an arbitrage-free price process $X(\cdot)$ for the payoff $X=X(T)$ at $T$. 
Because of Lebesgue-measurability, the implicit function $\Psi$
in \eqref{derivative} is a vector of payoffs like $X$ when evaluated at maturity using 
$\mathbf{a} = \mathbf{a}(T)$. Therefore, if uniqueness of the solution of 
\eqref{master_n} -- \eqref{master_1} is given for all possible outcomes of $\mathbf{a}(T)$,
e.g.~by part four of Theorem \ref{master-theo}, then the risk-neutral valuation formula 
\begin{equation}
\label{Risk-neutral Valuation Formula}
a_1(t)\mathbf{E}_{\mathbb{Q}}\left[\left.\frac{\Psi(\mathbf{a}(T))}{a_1(T)}\right|\mathcal{F}_t\right]
\end{equation}
provides
no-arbitrage prices of the equities and the liabilities of the considered system at any time $t\in\mathbb{T}$.
While completeness of the market given by the underlying exogenous assets' price processes plays no role 
in this consideration, uniqueness of the no-arbitrage prices at maturity (given any scenario for the 
underlying assets) is crucial. However, uniqueness of equities' and liabilities' no-arbitrage prices at
any time {\em before} maturity naturally depends on the replicability of $\Psi(\mathbf{a}(T))$ 
at maturity in terms of the exogenous assets, which would be given in a generally complete market.

Regarding the balance sheet considerations of Sec.~\ref{The balance sheet equation},
it is clear that the accounting equation \eqref{balance sheet} also holds when
$\mathbf{a}$, $\mathbf{s}$, and the $\mathbf{d}^i$ are replaced by corresponding no-arbitrage
prices before maturity. This follows directly from the application of 
\eqref{Risk-neutral Valuation Formula_1} to both sides of \eqref{balance sheet}.

Regarding the measures of leverage and cross-ownership in Sec.~\ref{Measures of leverage and cross-ownership},
similarly to the balance sheet equation case, in \eqref{L} -- \eqref{MMLast} one could substitute
$\mathbf{a}$, $\mathbf{s}$, and the $\mathbf{d}^i$ by corresponding
no-arbitrage prices before maturity. In general, this would lead to results different from 
\eqref{measure before maturity}.


\subsection{More exogenous assets than firms}
\label{Extended exogenous assets}

So far we have assumed the exogenous assets to be given by a vector $\mathbf{a}\in (\mathbb{R}_0^+)^n$,
where the dimension $n$ is the number of firms considered in the system. This rather artificial assumption,
which was entirely sufficient for our considerations at maturity conditional on one price scenario
of the exogenous assets, can naturally be extended in the
following way. We assume that the exogenous assets are given by a vector 
$\mathbf{a}\in (\mathbb{R}_0^+)^q$ where $q\in\{1,2,\ldots\}$. To describe the ownership of these
assets we need an ownership matrix $\mathbf{M}^a\in\mathbb{R}^{n\times q}$ with column sums in $[0,1]$. 
Similar to the obvious change in the liquidation value
(no-arbitrage price) equations \eqref{master_n} -- \eqref{master_1}, 
which become
\begin{eqnarray}
\label{master_na}
\mathbf{r}^1 & = & \min\left\{\mathbf{d}^1_{\mathbf{r}^1,\ldots,\mathbf{r}^m,\mathbf{s},\mathbf{a}},
\quad\, \mathbf{M}^a\cdot\mathbf{a} + \mathbf{M}^s\cdot\mathbf{s}  
+ \sum_{i=1}^{m} \mathbf{M}^{d,i}\cdot\mathbf{r}^i \right\}\\
\nonumber & & \mbox{For $0 < j < m$:} \\
\label{master_ja} 
\mathbf{r}^{j+1} & = & \min\left\{\mathbf{d}^{j+1}_{\mathbf{r}^1,\ldots,\mathbf{r}^m,\mathbf{s},\mathbf{a}},\;
\left(\mathbf{M}^a\cdot\mathbf{a} + \mathbf{M}^s\cdot\mathbf{s}  
+ \sum_{i=1}^{m} \mathbf{M}^{d,i}\cdot\mathbf{r}^i 
- \sum_{i=1}^{j} \mathbf{d}^i_{\mathbf{r}^1,\ldots,\mathbf{r}^m,\mathbf{s},\mathbf{a}} \right)^+ \right\} \\
\label{master_1a}
\mathbf{s} & = & \qquad\qquad\qquad\quad\quad \left(\mathbf{M}^a\cdot\mathbf{a} + \mathbf{M}^s\cdot\mathbf{s}  
+ \sum_{i=1}^{m} \mathbf{M}^{d,i}\cdot\mathbf{r}^i 
- \sum_{i=1}^{m} \mathbf{d}^i_{\mathbf{r}^1,\ldots,\mathbf{r}^m,\mathbf{s},\mathbf{a}} \right)^+ ,
\end{eqnarray}
the theory as described so far stays exactly the same, or can be adapted in the entirely obvious way by
replacing $\mathbf{a}$ with $\mathbf{M}^a\cdot\mathbf{a}$, 
including Theorem \ref{master-theo} and Section \ref{Equilibria for non-zero maturities} on 
prices before maturity. However, using the extended set-up
allows for even more general liabilities (derivatives) due to 
$\mathbf{d}^i_{\mathbf{r}^1,\ldots,\mathbf{r}^m,\mathbf{s}} = 
\mathbf{d}^i_{\mathbf{r}^1,\ldots,\mathbf{r}^m,\mathbf{s},\mathbf{a}}$ (cf.~Eq.~\eqref{liabilities a}).
See Sec.~\ref{Example with stocks, bonds, and derivatives} for an example.


\subsection{An example with stocks, bonds, and derivatives}
\label{Example with stocks, bonds, and derivatives}

Consider a system of two firms, $n=2$, and three exogenous assets, $\mathbf{a}\in(\R_0^+)^3$ and 
$\mathbf{M}^a\in\R^{2\times 3}$ (cf.~Sec.~\ref{Extended exogenous assets}). Suppose now for the
liabilities ($m=3$)
\begin{equation}
\label{example6.3}
\mathbf{d}^1
= 
\left(
\begin{array}{l}
b_1\\
b_3\\
\end{array}
\right) , \quad
\mathbf{d}^2_{\mathbf{a}}
= 
\left(
\begin{array}{l}
b_2\\
c_2(0.5a_1+a_3 - k_2)^-\\
\end{array}
\right) , \quad
\mathbf{d}^3_{\mathbf{a}}
= 
\left(
\begin{array}{l}
c_1(a_2-k_1)^+\\
0\\
\end{array}
\right) ,
\end{equation}
where $b_1,b_2,b_3,c_1,c_2,k_1,k_2 > 0$.
As in Assumption \ref{master-def_1}, let $\mathbf{M}^s, \mathbf{M}^{d,i} \in\R^{2\times 2}$ $(i=1,2,3)$ be 
strictly left substochastic matrices. 
The described system is one of two firms where the first
one has issued two bonds of differing seniority (that of nominal $b_1$ higher than that of $b_2$),
and one derivative -- a European Call on exogenous asset $a_2$ with a strike price of $k_1$ and 
a `size' of $c_1$.
The second firm has issued one bond (nominal $b_3$) and one derivative -- a European Put
on a basket (a mix) of exogenous assets $a_1$ and $a_3$ with a strike price of $k_2$ and a `size'
of $c_2$. Additionally, and not specified in more detail, any level of cross-ownership of any
of these five liabilities and the two equities could be present.
It is clear in this case that part four of Theorem \ref{master-theo} applies, with the unique 
no-arbitrage price equilibrium 
given by Eq.~\eqref{master_na} -- \eqref{master_1a}. These no-arbitrage prices could be calculated using
the algorithm \eqref{master-algo} of Sec.~\ref{Fourth part of Theorem and algorithm}.
Using risk-neutral pricing techniques under a stochastic model for the exogenous assets
(cf.~Sec.~\ref{Equilibria for non-zero maturities}), one could therefore simultaneously calculate
no-arbitrage prices of all claims (equities, loans, derivatives) in this system, while fully accounting
for the priority of claims, as well as for leverage and counterparty risk caused by cross-ownership.

As a remark, in \eqref{example6.3} the second entries of $\mathbf{d}^2_{\mathbf{a}}$ and
$\mathbf{d}^3_{\mathbf{a}}$ could be swapped without any consequences for pricing. Furthermore, 
it is clear in what way the above example would simplify if one was only interested in the
valuation of equity, bonds and derivatives of differing seniority issued by one single firm,
free of any cross-ownership entanglements.


\subsection{No unique prices under Assumption \ref{assumption_1}}
\label{No unique equilibrium}

Suppose $n=2$ and $m=1$, $\mathbf{M}^s=\mathbf{0}$ and 
\begin{equation}
\mathbf{M}^{d} = \left(
\begin{array}{*{2}{c}}
\arraycolsep0pt
0 & 0.8\\
0.8 & 0 
\end{array}
\right) ,
\quad
\mathbf{d}_{\mathbf{r},\mathbf{s}} = 
\left(
\begin{array}{l}
(r_2-2)^2 \\
(r_1-2)^2
\end{array}
\right) ,
\quad
\mathbf{a} =  
\left(
\begin{array}{l}
1 \\
1
\end{array}
\right) .
\end{equation}
It can now easily be checked that the no-arbitrage equations
\begin{eqnarray}
r_1 & = & \min\{(r_2-2)^2 , 1 + 0.8r_2\} \\
r_2 & = & \min\{(r_1-2)^2 , 1 + 0.8r_1\} \\
s_1 & = & (1 + 0.8r_2 - (r_2-2)^2)^+ \\
s_2 & = & (1 + 0.8r_1 - (r_1-2)^2)^+ .
\end{eqnarray}
are solved by
\begin{equation}
\mathbf{r}^{*} =  
\left(
\begin{array}{l}
1 \\
1
\end{array}
\right) 
\quad\text{ and }\quad
\mathbf{s}^{*} =  
\left(
\begin{array}{l}
0.8 \\
0.8
\end{array}
\right) ,
\end{equation}
as well as by
\begin{equation}
\mathbf{r}^{*} =  
\left(
\begin{array}{l}
4 \\
4
\end{array}
\right) 
\quad\text{ and }\quad
\mathbf{s}^{*} =  
\left(
\begin{array}{l}
0.2 \\
0.2
\end{array}
\right) .
\end{equation}
Note that in this example \eqref{master-l-cond} does not hold.
For the theoretical meaning of this example see also Sec.~\ref{A comment on uniqueness}.


\subsection{No price equilibrium under maximum cross-ownership}
\label{No equilibrium under maximum cross-ownership}

Let $\mathbf{M}^s$ be a maximum ownership matrix in the sense that the matrix is left stochastic, 
that is each column adds up to $1$. Let $||\mathbf{a}||_1>0$, $\mathbf{d}^{i}\equiv\mathbf{0}$ and 
$\mathbf{M}^{d,i}=\mathbf{0}$ $(i=1,\ldots,m)$. 
Suppose now that a no-arbitrage equilibrium for this set-up exists. Because of the assumptions, 
$ \left|\left| \mathbf{s} \right|\right|_1 
= \left|\left| \mathbf{M}^s\cdot\mathbf{s} \right|\right|_1$.
Under no-arbitrage, \eqref{total_conservation_2} holds. Therefore,
\begin{equation}
\left|\left| \mathbf{s} \right|\right|_1 =
\left|\left| \mathbf{a} \right|\right|_1 + \left|\left| \mathbf{M}^s\cdot\mathbf{s} \right|\right|_1
= \left|\left| \mathbf{a} \right|\right|_1 + \left|\left| \mathbf{s} \right|\right|_1  ,
\end{equation}
which is a contradiction since $||\mathbf{a}||_1>0$. Also, since $I^{\max}=1$, one has
$L^{\max}=\frac{I^{\max}}{1-I^{\max}}=+\infty$.


\subsection{$L^{\max}$ is sharp}
\label{L_max is sharp}

Consider a system with $n=2$, $m=1$, $\mathbf{M}^{d}=\mathbf{0}$, $\mathbf{d}=\mathbf{0}$, 
\begin{equation}
\nonumber
\mathbf{M}^{s} = \left(
\begin{array}{*{2}{c}}
\arraycolsep0pt
0 & 0.5\\
0.5 & 0 
\end{array}
\right) ,
\quad
\text{and}
\quad
\mathbf{a} =  
\left(
\begin{array}{l}
1 \\
1
\end{array}
\right) .
\end{equation}
The no-arbitrage equations are
\begin{eqnarray}
\label{r2}
\mathbf{r} & = & \mathbf{0}\\
\label{s2}
\mathbf{s} & = & (\mathbf{a} + \mathbf{M}^s\cdot\mathbf{s})^+ .
\end{eqnarray}
We therefore have $L^{\max} = 1$, and the upper boundary of the sum of all balance sheets is 
$(L^{\max}+1)||\mathbf{a}||_1 = 4$. It is clear from Theorem \ref{master-theo} that we have
unique no-arbitrage prices in this set-up. It can easily be checked against \eqref{r2} and \eqref{s2}
that this price equilibrium is
given by $r_1 = r_2 =0$ and $s_1 = s_2 = 2$, which means that the sum of all balance sheets
equals $4$, which is the value of $(L^{\max}+1)||\mathbf{a}||_1$. Hence, $L^{\max}$ is sharp.


\section{Conclusion}

This paper has presented a model for the no-arbitrage valuation of equities and general liabilities in a
system of firms where cross-ownership is present and where liabilities, that can 
include debts and derivatives, can be of differing seniority in a liquidation. Cross-ownership is a 
widespread financial
phenomenon (cf.~Sec.~\ref{General liabilities under cross-ownership}), and as the presented
theory directly accounts for counterparty risk and, in a way, systemic risk, its 
valuation procedure should be relevant for the theory and practice of general asset valuation, 
derivatives pricing,
and credit risk management. For the application of the ideas of this paper, it might be helpful
that our theory is a direct extension of the Merton (1974) model (see also Sec.~\ref{Valued added}),
which is the basis of modern structural credit risk models. For instance, the theory presented in
this paper is general enough to be applied in stochastic interest rates settings, in settings where 
underlying exogenous assets are modelled with copula approaches, and for credit risk modelling, 
where specifically defined default barriers could be applied for the calculation of default probabilities.
Future directions of research to extend this theory should include
investigations on the range of liabilities that allow for unique no-arbitrage price equilibria 
(weaker forms of Assumption \ref{master-def_2}) and, of course, the question of no-arbitrage
pricing in the multi-period case.



\begin{appendix}


\section{Appendix}
\label{Technical results}


\subsection{A result for substochastic matrices}
\label{A result for substochastic matrices}

\begin{lemma}
\label{M-lemma}
If $\mathbf{M}\in \R^{n\times n}$ is a strictly (left or right) substochastic matrix, then
$(\mathbf{I}-\mathbf{M})^{-1}$ exists and is non-negative. The diagonal elements of 
$(\mathbf{I}-\mathbf{M})^{-1}$ are greater than or equal to 1.
\end{lemma}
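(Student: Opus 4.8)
The plan is to exploit the operator-norm bound \eqref{matrix-norm}, which gives $||\mathbf{M}||_1 < 1$ for any strictly left substochastic matrix $\mathbf{M}$, and to represent $(\mathbf{I}-\mathbf{M})^{-1}$ by its Neumann series $\sum_{k=0}^{\infty}\mathbf{M}^k$. First I would treat the left substochastic case in full and then reduce the right substochastic case to it by transposition.

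For the left substochastic case: since $||\mathbf{M}||_1 < 1$ and $||\cdot||_1$ is submultiplicative (as recorded in \eqref{l_1}), one has $||\mathbf{M}^k||_1 \leq ||\mathbf{M}||_1^k$, so the partial sums $\sum_{k=0}^{N}\mathbf{M}^k$ are Cauchy in the finite-dimensional (hence complete) space $\R^{n\times n}$ and converge to some matrix $\mathbf{S}$. The telescoping identity $(\mathbf{I}-\mathbf{M})\sum_{k=0}^{N}\mathbf{M}^k = \mathbf{I} - \mathbf{M}^{N+1}$, together with $\mathbf{M}^{N+1}\to\mathbf{0}$, yields $(\mathbf{I}-\mathbf{M})\mathbf{S} = \mathbf{S}(\mathbf{I}-\mathbf{M}) = \mathbf{I}$, so $\mathbf{I}-\mathbf{M}$ is invertible with $(\mathbf{I}-\mathbf{M})^{-1} = \mathbf{S}$. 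Non-negativity then follows because $\mathbf{M}\geq\mathbf{0}$ by definition of a substochastic matrix, so each power $\mathbf{M}^k$ is entrywise non-negative, and hence so is the limit $\mathbf{S}$ of the (entrywise non-negative) partial sums. For the diagonal bound, observe that in $\mathbf{S} = \sum_{k=0}^{\infty}\mathbf{M}^k$ the $k=0$ term contributes exactly $1$ to each diagonal entry while every further term contributes a non-negative amount, so $S_{ii} = 1 + \sum_{k=1}^{\infty}(\mathbf{M}^k)_{ii} \geq 1$.

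Finally, the right substochastic case reduces to the left one: if $\mathbf{M}$ is strictly right substochastic, then $\mathbf{M}^t$ is strictly left substochastic, so by the above $(\mathbf{I}-\mathbf{M}^t)^{-1}$ exists, is non-negative, and has diagonal entries at least $1$. Since $\mathbf{I}-\mathbf{M}^t = (\mathbf{I}-\mathbf{M})^t$, we get $\big((\mathbf{I}-\mathbf{M})^{-1}\big)^t = (\mathbf{I}-\mathbf{M}^t)^{-1}$, and transposition preserves entrywise non-negativity and leaves the diagonal unchanged, so the same conclusions transfer to $\mathbf{M}$.

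I do not expect a genuine obstacle here; the proof is essentially the standard Neumann-series argument. The only step needing a little care is the passage from the norm inequality $||\mathbf{M}||_1 < 1$ to the actual invertibility of $\mathbf{I}-\mathbf{M}$ (as opposed to mere convergence of the series), which the telescoping identity settles cleanly, and the bookkeeping that isolates the $k=0$ term to obtain the diagonal lower bound.
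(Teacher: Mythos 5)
Your proof is correct and follows essentially the same route as the paper, which also establishes the result via the Neumann series $(\mathbf{I}-\mathbf{M})^{-1}=\sum_{k=0}^{\infty}\mathbf{M}^k$ together with $\mathbf{M}^k\geq\mathbf{0}$ and $\mathbf{M}^0=\mathbf{I}$. You merely spell out the convergence and the transposition reduction for the right substochastic case in more detail than the paper does.
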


\begin{proof}
For $||\mathbf{M}||_1 < 1$, it follows from standard results of functional analysis that
$(\mathbf{I}-\mathbf{M})^{-1}$ exists and
\begin{equation}
(\mathbf{I}-\mathbf{M})^{-1} = \sum_{n=0}^{\infty} \mathbf{M}^n .
\end{equation}
The lemma then follows from $\mathbf{M}^n \geq \mathbf{0}$ for $n=0,1,\ldots$, and 
$\mathbf{M}^0=\mathbf{I}$.
\end{proof}

\subsection{Two fixed point theorems}
\label{Banach's Contraction Theorem}

\begin{theorem}[Brouwer--Schauder Fixed Point Theorem]
\label{Brouwer}
Every continuous function from a convex compact subset $\mathbb{K}$ of a Banach space to 
$\mathbb{K}$ itself has a fixed point.
\end{theorem}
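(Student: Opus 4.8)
The plan is to reduce this infinite-dimensional statement to the classical finite-dimensional Brouwer fixed point theorem and then recover the general case by Schauder's finite-dimensional approximation argument.

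First I would record the finite-dimensional case: every continuous map $g\colon C\to C$ with $C\subseteq\R^n$ compact and convex has a fixed point. For $C$ a closed ball this is Brouwer's theorem; the standard route is to show there is no continuous retraction of a closed ball onto its bounding sphere (provable combinatorially via Sperner's lemma, or analytically via smooth approximation together with an integration/degree argument), and then to deduce the fixed point property by the usual radial contradiction: if $g$ had no fixed point, the map sending $x$ to the point where the ray from $g(x)$ through $x$ meets the sphere would be such a retraction. An arbitrary compact convex $C\subseteq\R^n$ is homeomorphic either to a point or to a closed ball of some dimension $\le n$, and the fixed point property is a topological invariant, so the statement passes to all such $C$.

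Next I would take a Banach space $X$, a convex compact $\mathbb{K}\subseteq X$, and a continuous $f\colon\mathbb{K}\to\mathbb{K}$, and fix $\varepsilon>0$. By compactness choose a finite $\varepsilon$-net $y_1,\dots,y_m\in\mathbb{K}$ and set $\mathbb{K}_\varepsilon=\mathrm{conv}\{y_1,\dots,y_m\}$, which lies in $\mathbb{K}$ by convexity and is a compact convex subset of the finite-dimensional space $\mathrm{span}\{y_1,\dots,y_m\}$. Define the Schauder projection
\begin{equation}
P_\varepsilon(x)=\frac{\sum_{i=1}^{m}\lambda_i(x)\,y_i}{\sum_{i=1}^{m}\lambda_i(x)},\qquad
\lambda_i(x)=\max\{0,\ \varepsilon-\|x-y_i\|\},
\end{equation}
which is well defined and continuous on $\mathbb{K}$ (the denominator never vanishes, as the $y_i$ form an $\varepsilon$-net) and satisfies $\|P_\varepsilon(x)-x\|\le\varepsilon$ for all $x\in\mathbb{K}$, since $P_\varepsilon(x)$ is a convex combination of those $y_i$ with $\|x-y_i\|<\varepsilon$. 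Then $P_\varepsilon\circ f$ maps $\mathbb{K}_\varepsilon$ continuously into $\mathbb{K}_\varepsilon$, so the finite-dimensional case yields a fixed point $x_\varepsilon\in\mathbb{K}_\varepsilon$, for which $\|x_\varepsilon-f(x_\varepsilon)\|=\|P_\varepsilon(f(x_\varepsilon))-f(x_\varepsilon)\|\le\varepsilon$.

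Finally I would let $\varepsilon=1/k\to0$ to get $x_k\in\mathbb{K}$ with $\|x_k-f(x_k)\|\to0$; by compactness a subsequence $x_{k_j}\to x^\ast\in\mathbb{K}$, and continuity of $f$ together with $\|x_{k_j}-f(x_{k_j})\|\to0$ forces $x^\ast=f(x^\ast)$. The main obstacle is genuinely the finite-dimensional Brouwer theorem, which has no purely elementary proof and requires a combinatorial (Sperner) or topological/analytic (degree-theoretic) input; once that is available, the approximation step is routine, the only delicate points being the continuity and the $\varepsilon$-closeness of the Schauder projection $P_\varepsilon$.
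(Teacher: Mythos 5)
The paper states this theorem in its appendix as a classical result and gives no proof of its own, so there is nothing to compare against; it is simply cited as standard background for the existence part of Theorem \ref{master-theo}. Your argument is the standard and correct textbook proof of the Schauder fixed point theorem: finite-dimensional Brouwer via the no-retraction theorem, reduction of a general compact convex set in $\R^n$ to a ball, the Schauder projection onto the convex hull of an $\varepsilon$-net with the estimate $\|P_\varepsilon(x)-x\|\le\varepsilon$, and a compactness-plus-continuity passage to the limit. All steps are sound, including the two points you flag as delicate (non-vanishing of the denominator because the $y_i$ form an $\varepsilon$-net, and the $\varepsilon$-closeness because $P_\varepsilon(x)$ is a convex combination only of those $y_i$ within distance $\varepsilon$ of $x$).
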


\begin{definition}
\label{Contraction}
Let $(\mathbb{X},d)$ be a metric space. A map $\Phi: \mathbb{X} \rightarrow \mathbb{X}$ 
is called a strict contraction on $\mathbb{X}$ if 
there exists a number $0 \leq c < 1$ such that
\begin{equation}
d(\Phi(\mathbf{x}),\Phi(\mathbf{y})) \leq c\cdot d(\mathbf{x}, \mathbf{y}) \quad \text{for} 
\quad \mathbf{x},\mathbf{y}\in \mathbb{X} .
\end{equation}
The map $\Phi$ is called a weak contraction if
\begin{equation}
d(\Phi(\mathbf{x}),\Phi(\mathbf{y})) \leq d(\mathbf{x}, \mathbf{y}) \quad \text{for} 
\quad \mathbf{x},\mathbf{y}\in \mathbb{X} .
\end{equation}
\end{definition}

\begin{theorem}[Banach Contraction Mapping Theorem]
\label{Banach}
Let $\mathbb{X}$ be a complete metric space and f be a strict contraction on $\mathbb{X}$. 
Then $\Phi$ has a unique fixed point $\mathbf{x}^* \in \mathbb{X}$. For any 
$\mathbf{x} \in \mathbb{X}$, one has
\begin{equation}
\label{algo}
\lim_{n\rightarrow \infty} \Phi^n(\mathbf{x}) = \lim_{n\rightarrow \infty} 
\underbrace{\Phi\circ\ldots\circ \Phi}_n (\mathbf{x}) = \mathbf{x}^* . 
\end{equation}
\end{theorem}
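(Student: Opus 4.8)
The plan is the classical Picard--Banach iteration argument. First I would fix an arbitrary $\mathbf{x}_0\in\mathbb{X}$ and define the sequence $\mathbf{x}_{n+1}=\Phi(\mathbf{x}_n)$, so that $\mathbf{x}_n=\Phi^n(\mathbf{x}_0)$. Applying the contraction inequality $d(\Phi(\mathbf{x}),\Phi(\mathbf{y}))\le c\,d(\mathbf{x},\mathbf{y})$ repeatedly yields $d(\mathbf{x}_{n+1},\mathbf{x}_n)\le c\,d(\mathbf{x}_n,\mathbf{x}_{n-1})\le\cdots\le c^n\,d(\mathbf{x}_1,\mathbf{x}_0)$.

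Next I would show that $(\mathbf{x}_n)_{n\ge 0}$ is a Cauchy sequence. For $n<p$ the triangle inequality together with the bound just obtained gives
\[
d(\mathbf{x}_n,\mathbf{x}_p)\ \le\ \sum_{k=n}^{p-1}d(\mathbf{x}_{k+1},\mathbf{x}_k)\ \le\ d(\mathbf{x}_1,\mathbf{x}_0)\sum_{k=n}^{\infty}c^k\ =\ \frac{c^n}{1-c}\,d(\mathbf{x}_1,\mathbf{x}_0),
\]
and the right-hand side tends to $0$ as $n\to\infty$ because $0\le c<1$. Since $\mathbb{X}$ is complete, the sequence converges to some $\mathbf{x}^*\in\mathbb{X}$.

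To identify $\mathbf{x}^*$ as a fixed point I would use that a strict contraction is $1$-Lipschitz, hence continuous, so that $\Phi(\mathbf{x}^*)=\Phi(\lim_n\mathbf{x}_n)=\lim_n\Phi(\mathbf{x}_n)=\lim_n\mathbf{x}_{n+1}=\mathbf{x}^*$; alternatively one can bound $d(\Phi(\mathbf{x}^*),\mathbf{x}^*)\le d(\Phi(\mathbf{x}^*),\mathbf{x}_{n+1})+d(\mathbf{x}_{n+1},\mathbf{x}^*)\le c\,d(\mathbf{x}^*,\mathbf{x}_n)+d(\mathbf{x}_{n+1},\mathbf{x}^*)$ and let $n\to\infty$. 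For uniqueness, if $\mathbf{x}^*$ and $\mathbf{y}^*$ are both fixed points, then $d(\mathbf{x}^*,\mathbf{y}^*)=d(\Phi(\mathbf{x}^*),\Phi(\mathbf{y}^*))\le c\,d(\mathbf{x}^*,\mathbf{y}^*)$, so $(1-c)\,d(\mathbf{x}^*,\mathbf{y}^*)\le 0$ and therefore $\mathbf{x}^*=\mathbf{y}^*$. Since $\mathbf{x}_0$ was arbitrary, the very same construction shows that $\Phi^n(\mathbf{x})$ converges to a fixed point for every $\mathbf{x}\in\mathbb{X}$, which by uniqueness must be $\mathbf{x}^*$; this is precisely \eqref{algo}.

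This theorem is entirely classical, so I do not expect a genuine obstacle. The only step requiring mild care is the Cauchy estimate: one must sum the geometric tail $\sum_{k\ge n}c^k$ and check that the bound is uniform in $p$ so that completeness can be invoked; everything else is an immediate consequence of the contraction inequality.
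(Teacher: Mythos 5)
Your proof is correct and is the standard Picard iteration argument; the paper itself states Theorem \ref{Banach} in the appendix as a classical result and offers no proof at all, so there is nothing to diverge from. The one cosmetic issue is the statement's own typo (it introduces $f$ but then uses $\Phi$), which your write-up silently and sensibly resolves by working with $\Phi$ throughout.
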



\subsection{Two lemmas}
\label{A result for the sum of certain differences}

\begin{lemma}
\label{remark1}
For $x\in\R$, $m\in\{1,2,\ldots\}$, and $y^1,\ldots,y^m\in\R_0^+$,
\begin{eqnarray}
\label{capcons}
x & = & \min\left\{y^1,x\right\}
\; + \; \sum_{j=1}^{m-1}\;\; \min\left\{y^{j+1},\left(x-\sum_{i=1}^{j}y^i\right)^+\right\} 
\; + \; \left(x-\sum_{i=1}^{m}y^i\right)^+ .
\end{eqnarray}
\end{lemma}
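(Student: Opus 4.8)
The plan is to prove the identity \eqref{capcons} by a telescoping argument built on the ``waterfall'' reading of its right-hand side: $x$ is the amount available for distribution, $y^1,\ldots,y^m$ are claims to be paid in decreasing order of priority, each bracketed term is the cash actually handed to the next claim, and the final $(x-\sum_{i=1}^m y^i)^+$ is what is left over. Concretely, I would introduce $S_0=0$, $S_j=\sum_{i=1}^{j}y^i$ for $j=1,\ldots,m$, and the ``residuals'' $R_0=x$ together with $R_j=(x-S_j)^+$ for $j=1,\ldots,m$; the right-hand side of \eqref{capcons} is then, after reindexing, precisely $\min\{y^1,x\}+\sum_{j=2}^{m}\min\{y^j,R_{j-1}\}+R_m$.

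The core of the proof is the elementary claim that the residuals decrease exactly by the payments: namely $R_0-R_1=\min\{y^1,x\}$, and for every $j\in\{2,\ldots,m\}$,
\begin{equation}
\label{residualstep}
R_{j-1}-R_j \;=\; \min\{y^j,\,R_{j-1}\}.
\end{equation}
The first of these is $x-(x-y^1)^+=\min\{y^1,x\}$, checked by the two cases $x\leq y^1$ and $x>y^1$. For \eqref{residualstep} I would write $a=R_{j-1}=(x-S_{j-1})^+\geq 0$ and use $S_j=S_{j-1}+y^j$ to observe that $(x-S_j)^+=(a-y^j)^+$; the verification of this observation is the one spot where the hypothesis $y^j\geq 0$ is genuinely used, since when $x-S_{j-1}\leq 0$ one has $a=0$ and both $(x-S_j)^+$ and $(a-y^j)^+$ vanish because $y^j\geq 0$, while when $x-S_{j-1}>0$ both equal $(x-S_{j-1}-y^j)^+$. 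Granting $(x-S_j)^+=(a-y^j)^+$, \eqref{residualstep} becomes the trivial identity $a-(a-y^j)^+=\min\{y^j,a\}$ for $a\geq 0$.

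With \eqref{residualstep} in hand the proof finishes immediately by telescoping:
\begin{equation}
\min\{y^1,x\}+\sum_{j=2}^{m}\min\{y^j,R_{j-1}\}+R_m \;=\; \sum_{j=1}^{m}(R_{j-1}-R_j)+R_m \;=\; R_0 \;=\; x,
\end{equation}
and substituting back $R_{j-1}=(x-\sum_{i=1}^{j-1}y^i)^+$ and $R_m=(x-\sum_{i=1}^{m}y^i)^+$ (and reindexing $j\mapsto j+1$ in the sum) recovers exactly the right-hand side of \eqref{capcons}. An equivalent route is a short induction on $m$, with base case $\min\{y^1,x\}+(x-y^1)^+=x$ and inductive step given by applying the same two elementary facts to the last claim $y^m$; I would mention this as an alternative. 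I do not anticipate a real obstacle: the argument is routine, and the only point demanding any care is the bookkeeping of positive parts — specifically, recognizing that non-negativity of the $y^i$ is precisely what forces the residuals to propagate as $(x-S_j)^+$ rather than as signed quantities.
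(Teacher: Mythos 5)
Your proof is correct: the residual identity $R_{j-1}-R_j=\min\{y^j,R_{j-1}\}$ (which does rely on $y^j\geq 0$ exactly where you say it does) telescopes cleanly to give \eqref{capcons}. The paper offers no argument at all here beyond ``This is easy to check,'' so your waterfall/telescoping decomposition is simply a complete written-out version of the routine verification the author left to the reader.
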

\begin{proof}
This is easy to check. 
\end{proof}

\begin{lemma}
\label{stuff-nn}
For $x,y^1,y^2\in\R$ and for monotonically increasing functions $\psi^i: \R \rightarrow \R^+_0$ $(i=1,\ldots,m)$
such that for any $z^1,z^2\in\R$ with $z^1 \geq z^2$
\begin{equation}
\label{l}
z^1-z^2 \; \geq \; \sum_{i=1}^{m} (\psi^i(z^1) - \psi^i(z^2))
\end{equation}
the following equation holds:
\begin{eqnarray}
\label{stuffnn}
|y^1-y^2| & = & 
\left|\min\left\{\psi^1(y^1),x+y^1\right\} - \min\left\{\psi^1(y^2),x+y^2\right\}\right|\\
\nonumber &  & + \; 
\sum_{j=1}^{m-1}\;\; \left|\min\left\{\psi^{j+1}(y^1),\left(x+y^1-\sum_{i=1}^{j}\psi^i(y^1)\right)^+\right\}\right. \\
\nonumber &  & \qquad\qquad 
- \; \left.\min\left\{\psi^{j+1}(y^2),\left(x+y^2-\sum_{i=1}^{j}\psi^i(y^2)\right)^+\right\}\right| \\
\nonumber &  & + \left|\left(x+y^1-\sum_{i=1}^{m}\psi^i(y^1)\right)^+ 
- \left(x+y^2-\sum_{i=1}^{m}\psi^i(y^2)\right)^+\right| .
\end{eqnarray}
\end{lemma}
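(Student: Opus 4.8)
The plan is to derive \eqref{stuffnn} from Lemma \ref{remark1} together with a single monotonicity observation. First I would note that both sides of \eqref{stuffnn} are unchanged if $y^1$ and $y^2$ are swapped, so I may assume $y^1 \ge y^2$ and take $z^1 = y^1$, $z^2 = y^2$ in hypothesis \eqref{l}. For a real number $y$, applying Lemma \ref{remark1} with $x$ replaced by $x+y$ and each $y^i$ replaced by $\psi^i(y) \ge 0$ expresses $x+y$ as a sum $\sum_{k=0}^{m} T_k(y)$, where $T_0(y) = \min\{\psi^1(y), x+y\}$, the $T_j(y)$ for $1 \le j \le m-1$ are the intermediate minima appearing in \eqref{stuffnn}, and $T_m(y) = (x+y - \sum_{i=1}^m \psi^i(y))^+$. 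Subtracting the identity at $y^2$ from the identity at $y^1$ gives $\sum_{k=0}^m (T_k(y^1) - T_k(y^2)) = y^1 - y^2$, while the right-hand side of \eqref{stuffnn} is precisely $\sum_{k=0}^m |T_k(y^1) - T_k(y^2)|$.

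It therefore suffices to show that each $T_k$ is nondecreasing in $y$: then for $y^1 \ge y^2$ every signed difference $T_k(y^1) - T_k(y^2)$ is nonnegative, the absolute values may be dropped, and the two sums coincide, yielding $|y^1 - y^2| = y^1 - y^2 = \sum_{k=0}^m |T_k(y^1) - T_k(y^2)|$. The key step is the monotonicity of the partial residuals $\rho_j(y) := x + y - \sum_{i=1}^j \psi^i(y)$ for $0 \le j \le m$. For $y^1 \ge y^2$ one has $\rho_j(y^1) - \rho_j(y^2) = (y^1 - y^2) - \sum_{i=1}^j (\psi^i(y^1) - \psi^i(y^2))$; since each $\psi^i$ is nondecreasing, the partial sum over $i \le j$ is bounded by the full sum over $i \le m$, which by \eqref{l} is at most $y^1 - y^2$, so $\rho_j$ is nondecreasing and hence so is $\rho_j^+$. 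Consequently $T_0 = \min\{\psi^1, \rho_0\}$, $T_j = \min\{\psi^{j+1}, \rho_j^+\}$ for $1 \le j \le m-1$, and $T_m = \rho_m^+$ are each minima (or positive parts) of nondecreasing functions, hence nondecreasing.

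The only genuinely nontrivial point is this last monotonicity of the $\rho_j$ for $j < m$, which is exactly where hypothesis \eqref{l} is used; it is worth noting that \eqref{l} is imposed only on the full sum $\sum_{i=1}^m$, but the individual monotonicity of the $\psi^i$ automatically transfers the bound to every partial sum. The rest — the symmetry reduction, the bookkeeping via Lemma \ref{remark1}, and removing the absolute values once the signs are known — is routine.
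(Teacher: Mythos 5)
Your proof is correct, and it takes a genuinely different and leaner route than the paper. The paper proves Lemma \ref{stuff-nn} by a direct case analysis: it first rules out the simultaneous occurrence of $\sum_{i=1}^{j}\psi^i(y^1)\geq x+y^1$ and $\sum_{i=1}^{k}\psi^i(y^2)\leq x+y^2$ for $j\leq k$ using \eqref{l}, and then works through six cases (with sub-cases) according to which ``regime'' each of $x+y^1$ and $x+y^2$ falls into relative to the cumulative sums $\sum_{i=1}^{j}\psi^i(\cdot)$, simplifying the right-hand side of \eqref{stuffnn} in each case. You instead factor the argument into two clean pieces: the telescoping identity of Lemma \ref{remark1} applied at $x+y^1$ and $x+y^2$ (which the paper uses elsewhere, for the accounting equation and the Brouwer--Schauder step, but not in its proof of this lemma), plus the observation that every summand $T_k$ is nondecreasing in $y$, so that for $y^1\geq y^2$ all signed differences are nonnegative and the absolute values can be dropped. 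Your monotonicity step is where \eqref{l} enters, exactly as you say, and your remark that the bound on the full sum transfers to every partial sum via the monotonicity of the individual $\psi^i$ is the same inequality the paper uses to exclude the impossible case combinations --- but packaged so that no case split is needed at all. What your approach buys is brevity and transparency (it makes clear that the identity is really ``telescoping plus monotonicity of each tranche''); what the paper's explicit case analysis buys is a concrete, self-contained verification of each regime that does not rely on recognizing the right-hand side as the decomposition from Lemma \ref{remark1}. Both are complete proofs.
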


Obviously, any of the functions $\psi^i$ in Lemma \ref{stuff-nn} could be a constant.

\begin{proof}
We will prove the equation considering six cases (with sub-cases) for which we will derive simplified
expressions for the right hand side of \eqref{stuffnn}. It will be fairly straightforward to check that
these expression are correct. Without loss of generality, assume $y^1 \geq y^2$, and hence
$\psi^i(y^1) \geq \psi^i(y^2)$ $(i=1,\ldots,m)$. Because of this,
{\em all} absolute expressions $|\cdot|$ below will be positive anyway. We keep the $|\cdot|$ for convenience. 
Regarding the aforementioned cases, note that for $y^1 \geq y^2$ and for
$j \leq k$, $j,k\in\{1,\ldots,m\}$, it is impossible to have a situation where simultaneously
\begin{eqnarray}
\label{c1} \sum_{i=1}^{j} \psi^{i}(y^1) & \geq & x + y^1 \\
\label{c2} \sum_{i=1}^{k} \psi^{i}(y^2) & \leq & x + y^2 .
\end{eqnarray}
This becomes immediately clear from a subtraction of the two inequalities, \eqref{c1}-\eqref{c2},
which leads to a contradiction of \eqref{l}:
\begin{equation}
y^1-y^2 \; \leq \; \sum_{i=1}^{j} \psi^{i}(y^1) - \sum_{i=1}^{k} \psi^{i}(y^2) 
\; \leq \; \sum_{i=1}^{k} (\psi^{i}(y^1) - \psi^{i}(y^2)) .
\end{equation}
We therefore only have to consider the following cases.\\
Case 1: $\sum_{i=1}^{m}\psi^i(y^1) \leq x+y^1$ and $\sum_{i=1}^{m}\psi^i(y^2) \leq x+y^2$ .
In this case, using \eqref{l}, the right side of \eqref{stuffnn} is identical to
\begin{equation}
\sum_{i=1}^{m}\; \left|\psi^{i}(y^1) - \psi^{i}(y^2)\right| 
\; + \;
\left|x+y^1-\sum_{i=1}^{m}\psi^i(y^1) - \left(x+y^2-\sum_{i=1}^{m}\psi^i(y^2)\right)\right|
\; = \; |y^1-y^2|
\end{equation}
Case 2: Assume that $m > k > 0$ and 
\begin{equation}
\sum_{i=1}^{m}\psi^i(y^1) \;\leq\; x+y^1 
\end{equation}
and
\begin{equation}
\sum_{i=1}^{k}\psi^i(y^2) \;\leq\; x+y^2 \;\leq\; \sum_{i=1}^{k+1}\psi^i(y^2) ,
\end{equation}
which implies
\begin{equation}
\label{nice}
0 \;\leq\; x+y^2 - \sum_{i=1}^{k}\psi^i(y^2) \;\leq\; \psi^{k+1}(y^2) \;\leq\; \psi^{k+1}(y^1) .
\end{equation}
\\
Case 2.1: $k = m-1$. The right hand side of \eqref{stuffnn} becomes
\begin{eqnarray} 
\sum_{i=1}^{m-1}\; \left|\psi^{i}(y^1) - \psi^{i}(y^2)\right| \; + \; 
\left|\psi^{m}(y^1) - \left(x+y^2-\sum_{i=1}^{m-1}\psi^i(y^2)\right)\right| & & \\
\nonumber \; + \; \left|x+y^1-\sum_{i=1}^{m}\psi^i(y^1)\right|  
& \stackrel{\eqref{nice}}{=} & |y^1-y^2| .
\end{eqnarray}
Case 2.2: $k \leq m-2$. The right hand side of \eqref{stuffnn} turns into
\begin{eqnarray}
\sum_{i=1}^{k}\; \left|\psi^{i}(y^1) - \psi^{i}(y^2)\right| 
\; + \;  \left|\psi^{k+1}(y^1) - \left(x+y^2-\sum_{i=1}^{k}\psi^i(y^2)\right)\right| & & \\
\nonumber 
\; + \;  |\psi^{k+2}(y^1)|   + \ldots  +  |\psi^{m}(y^1)|
\; + \; \left|x+y^1-\sum_{i=1}^{m}\psi^i(y^1)\right|  
& \stackrel{\eqref{nice}}{=} & |y^1-y^2| .
\end{eqnarray}
Case 3: Assume that $m > j \geq k > 0$ and 
\begin{equation}
\sum_{i=1}^{j}\psi^i(y^1) \;\leq\; x+y^1 \;\leq\; \sum_{i=1}^{j+1}\psi^i(y^1) 
\end{equation}
as well as
\begin{equation}
\sum_{i=1}^{k}\psi^i(y^2) \;\leq\; x+y^2 \;\leq\; \sum_{i=1}^{k+1}\psi^i(y^2) ,
\end{equation}
which again implies
\begin{equation}
\label{nice2}
0 \;\leq\; x+y^2 - \sum_{i=1}^{k}\psi^i(y^2) \;\leq\; \psi^{k+1}(y^2) \;\leq\; \psi^{k+1}(y^1) .
\end{equation}
Case 3.1: $j=k$. The right hand side of \eqref{stuffnn} becomes
\begin{eqnarray}
&   & \sum_{i=1}^{k}\; \left|\psi^{i}(y^1) - \psi^{i}(y^2)\right| 
 \; + \; \left|x+y^1-\sum_{i=1}^{k}\psi^i(y^1) - \left(x+y^2-\sum_{i=1}^{k}\psi^i(y^2)\right)\right|\\
\nonumber 
& \stackrel{\eqref{l}}{=} & |y^1-y^2| .
\end{eqnarray}
Case 3.2: $j=k+1$. The right hand side of \eqref{stuffnn} turns into
\begin{eqnarray}
\sum_{i=1}^{k}\; \left|\psi^{i}(y^1) - \psi^{i}(y^2)\right| +
\left|\psi^{k+1}(y^1) - \left(x+y^2-\sum_{i=1}^{k}\psi^i(y^2)\right)\right| & & \\
\nonumber \; + \; \left|x+y^1-\sum_{i=1}^{k+1}\psi^i(y^1)\right|  
& \stackrel{\eqref{nice2}}{=} & |y^1-y^2| .
\end{eqnarray}
Case 3.3: $j \geq k+2$. The right hand side of \eqref{stuffnn} becomes
\begin{eqnarray}
\sum_{i=1}^{k}\; \left|\psi^{i}(y^1) - \psi^{i}(y^2)\right| 
\; + \; 
\left|\psi^{k+1}(y^1) - \left(x+y^2-\sum_{i=1}^{k}\psi^i(y^2)\right)\right| 
& & \\
\nonumber + \; 
|\psi^{k+2}(y^1)| + \ldots  + |\psi^{j}(y^1)| 
\; + \; 
\left|x+y^1-\sum_{i=1}^{j}\psi^i(y^1)\right| 
& \stackrel{\eqref{nice2}}{=} & |y^1-y^2| .
\end{eqnarray}
Case 4: Assume $m > j > 0$ and 
\begin{equation}
\sum_{i=1}^{j}\psi^i(y^1) \;\leq\; x+y^1 \;\leq\; \sum_{i=1}^{j+1}\psi^i(y^1) 
\end{equation}
as well as
\begin{equation}
x+y^2 \;\leq\; \psi^1(y^2) .
\end{equation}
The right hand side of \eqref{stuffnn} turns into
\begin{equation}
|\psi^{1}(y^1) - (x+y^2)| +
|\psi^{2}(y^1)| + \ldots + |\psi^{j}(y^1)| +
\left|x+y^1-\sum_{i=1}^{j}\psi^i(y^1)\right| 
\;=\; |y^1-y^2| .
\end{equation}
Case 5: Assume $\sum_{i=1}^{m}\psi^i(y^1) \leq x+y^1$ and  $x+y^2 \leq \psi^1(y^2)$. The right hand side of
\eqref{stuffnn} turns into
\begin{equation}
|\psi^{1}(y^1) - (x+y^2)| +
|\psi^{2}(y^1)| + 
\ldots + 
|\psi^{m}(y^1)| + 
\left|x+y^1-\sum_{i=1}^{m}\psi^i(y^1)\right| 
\;=\; |y^1-y^2| .
\end{equation}
Case 6: Assume $x+y^1 \leq \psi^1(y^1)$ and  $x+y^2 \leq \psi^1(y^2)$. The right hand side of \eqref{stuffnn} 
becomes
\begin{equation}
|(x+y^1) - (x+y^2)| \;=\; |y^1-y^2| .
\end{equation}
The proof for $y^1 \leq y^2$ follows immediately from swapping $y^1$ and $y^2$ with each other
in \eqref{stuffnn}.
\end{proof}


\end{appendix}




\begin{thebibliography}{M} 
\bibitem{ABZ}
Arora, N., Bohn, J.R, Zhu, F.: Reduced Form vs.~Structural Models of Credit Risk:
A Case Study of three Models. Journal of Investment Management \textbf{3} (4), 43-67 (2005)
\bibitem{BKT}
Bebchuk, L.A., Kraakman, R., Triantis, G.:
Stock Pyramids, Cross-Ownership and Dual Class Equity: The Mechanisms and Agency Costs of
Separating Control From Cash-Flow Rights. Published in:
Concentrated Corporate Ownership (R. Morck, ed.). University of Chicago Press, 445-460 (2000)
\bibitem{BK}
Bingham, N.H., Kiesel, R.:
Risk-Neutral Valuation. Pricing and Hedging of Financial Derivatives. 
Second Edition. Springer (2004)
\bibitem{BS}
Black, F., Scholes, M.:
The pricing of options and corporate liabilities. 
Journal of Political Economy  \textbf{81} (3), 637-654
\bibitem{BM}
B{\o}hren, {\O}., Michalsen, D.: 
Corporate cross-ownership and market aggregates: Oslo stock exchange 1980--1990.
Journal of Banking \& Finance \textbf{18} (4), 687-704 (1994) 
\bibitem{CGM}
Crouhy, M., Galai, D., Mark, R.: 
A comparative analysis of current credit risk models.
Journal of Banking \& Finance \textbf{24}, 59-117 (2000) 
\bibitem{DLRS}
Dorofeenko, V., Lang, L.H.P., Ritzberger, K., Shorish, J.:
Who controls Allianz? Measuring the separation of dividend and control rights under cross-ownership 
among firms. Annals of Finance \textbf{4} (1), 75-103 (2008)
\bibitem{EMR}
Eberhart, A.C., Moore, W.T., Roenfeldt, R.L.:
Security Pricing and Deviations from the Absolute Priority Rule in Bankruptcy Proceedings.
Journal of Finance \textbf{45} (5), 1457-1469  (1990)
\bibitem{HK}
H\"opner, M., Krempel, L.:
The Politics of the German Company Network.
MPIfG Working Paper No.~2003-9 (2003)
\bibitem{H}
Horst, U.: 
Stochastic cascades, credit contagion, and large portfolio losses.
Journal of Economic Behavior \& Organization \textbf{63} (1), 25-54 (2007) 
\bibitem{JT}
Jarrow, R.A., Turnbull, S.M.: 
Pricing derivatives on financial securities subject to credit risk.
Journal of Finance \textbf{50} (1), 53-85 (1995) 
\bibitem{KB}
Kealhofer, S., Bohn, J.R: Portfolio Management of Default Risk. KMV Working Paper (2001)
(available at \href{http://www.moodyskmv.com}{www.moodyskmv.com})
\bibitem{LC}
Longhofer, S.D., Carlstrom, C.T.: Absolute priority rule violations in bankruptcy. 
Federal Reserve Bank of Cleveland. Economic Review \textbf{31} (4), 21-30 (1995)
\bibitem{M0}
McDonald, J.: The {\em Mochiai} effect: Japanese corporate cross holdings.
The Journal of Portfolio Management \textbf{16} (1), 90-94 (1989)
\bibitem{M1}
Merton, R.C.: 
Theory of rational option pricing. 
Bell Journal of Economics and Management Science \textbf{4} (1), 141-183 (1973) 
\bibitem{M2}
Merton, R.C.: 
On the pricing of corporate debt: the risk structure of interest rates. 
Journal of Finance \textbf{29} (2), 449-470 (1974)
\bibitem{M3}
Mitchell, R.: A Model Mind -- Robert C.~Merton on putting theory into practice. CFA Magazine, 
July-August, 34-37 (2004)
\bibitem{RS}
Ritzberger, K., Shorish, J.:
Cross-Ownership Among Firms: Some Determinants of the Separation of Ownership from Control.
Economics Series 113, Institute for Advanced Studies, Vienna (2002)
\bibitem{W}
Walsh, M.W.:
After Rescue, New Weakness Seen at A.I.G.
The New York Times (New York edition), July 31, p.~A1 (2009)
(available at \href{http://www.nytimes.com}{www.nytimes.com})
\end{thebibliography}
\end{document}